\newtheorem{lemma}{Lemma}[section]
\newtheorem{thm}{Theorem}[section]
\newtheorem{defi}{Definition}[section]
\newcommand{\Perp}{\perp \! \! \! \perp}
\begin{document}

\bibliographystyle{abbrv} 
%\citationstyle{dcu}

\Title{High dimensional sparse covariance estimation via directed acyclic
  graphs} 

\Author{Philipp R\"utimann and Peter B\"uhlmann}

\date{November 2009}
%%\Resrep{number}%-- AFTER \Date , BEFORE \maketitle

\maketitle % Macht einen Titel im LaTeX-Stil, der auch fuer Res.Rep.s
           % nuetzlich sein kann

\begin{abstract}
\normalsize
We present a graph-based technique for estimating sparse covariance
matrices and their inverses from high-dimensional data. The method
is based on learning a directed 
acyclic graph (DAG) and estimating parameters of a multivariate Gaussian
distribution based on a DAG. For inferring the
underlying DAG we use the PC-algorithm \cite{SpiPGS00}
and for estimating the DAG-based covariance matrix and its inverse, we 
use a 
Cholesky decomposition approach which provides a positive (semi-)definite
sparse estimate. We present a consistency result in the high-dimensional
framework and we compare our method with the Glasso \cite{FHT07,ABG08,BGA08} for
simulated and real data. 
\end{abstract}

\section{Introduction}

Estimation of covariance matrices is an important part of
multivariate analysis. There are many problems with high-dimensional
data where an estimation of the covariance matrix is of interest, for
example in principal component analysis or classification by discriminant
analysis. Application areas where such problems arise include gene
micro-arrays, imaging and image classification or text retrieval. In many of
these applications, the primary goal is the estimation of the inverse of a
covariance matrix $\Sigma^{-1}$, also known as the precision or
concentration matrix, rather than the covariance $\Sigma$ itself. In
low-dimensional settings with $p<n$, where $p$ 
denotes the row- or column-dimension of $\Sigma$ and $n$ the sample size, we can
obtain an estimate of $\Sigma^{-1}$ by estimation and inversion of
the Gaussian maximum likelihood estimator $\hat{\Sigma}_{MLE}$. But when p is
large, inversion of this estimate is problematic and its accuracy is very poor. 

Recently, two classes for high-dimensional covariance estimation have
emerged: those that rely on a 
natural ordering among variables and typically assuming that variables far
apart in 
the ordering are only weakly correlated, and those which are invariant to
variable permutation. Regularized estimation by
banding or tapering \cite{BL04,FB07,BL08} or using sparse
Cholesky factors of the inverse covariance matrix relying on the natural
ordering of the variables \cite{WP03,HPL06,LRZ08}  are members of the first
class of covariance estimators. When having no natural ordering among the
variables, estimators should be permutation invariant with respect to
indexing the variables. A popular approach to obtain a sparse
permutation-invariant estimate is to add a Lasso penalty on the entries of
the concentration matrix to the negative Gaussian log-likelihood
\cite{FHT07,ABG08,BGA08,RBLZ08}. This amounts to shrinking some of the
elements of the inverse covariance matrix exactly to zero. Alternatively,
the Lasso can be used for inferring an undirected conditional independence
graph using node-wise regressions \cite{MeinB06} and a covariance estimate
can then be obtained using the structure of the graph. Other approaches
include a simple hard-thresholding of the elements of the
unpenalized maximum likelihood estimator \cite{BLthresh08}, with the
disadvantage that the resulting estimate is not necessarily positive
(semi-) definite.  

The method which we present here is also invariant under permutation of
the variables. The type of regularization which we pursue is based on
exploiting a sparse graphical model structure first and then estimating
the covariance matrix and its inverse using non-regularized
estimation. Because of the sparsity of the graphical model structure, the 
second step does not need any regularization anymore. 
More precisely, we use a sparsely structured Cholesky decomposition
of the concentration matrix for estimation of the covariance and
concentration matrix. To obtain the structure of such a Cholesky factor, we
estimate a DAG (in fact, an equivalence class of DAGs). Thus, this approach
enforces a completely different sparsity structure on the Cholesky factor
than proposals for ordered data as in e.g. \cite{BL04,FB07,BL08,DY09}.

For a given DAG, our approach equals the iterative conditional fitting
(ICF) method presented in \cite{DR04,ChDR07} which reduces here to the
standard technique of fitting Gaussian DAG models. Our contribution is to
use an estimated DAG, i.e. an estimated equivalence class of DAGs from the
PC-algorithm \cite{SpiPGS00}, and to
analyze the method in the high-dimensional case taking the uncertainty of
structure estimation of the equivalence class of DAGs into account. 
We argue in this paper that within the class of methods which are invariant
under variable permutation, a graph-structured approach can be worthwhile
for a range of scenarios, sometimes resulting in performance gains up to
30-50\% over shrinkage methods. 

In Section \ref{graphterm} we give a brief overview over graph terminology
and graphical models. Section \ref{meth} introduces our methodology and we
show asymptotic consistency of the method in the high-dimensional framework
in Section \ref{consi}. Simulations and real data examples are presented in
Section \ref{simu} and we propose a robustified version of our procedure in
Section \ref{robu}.   

\section{Graph terminology and graphical models}\label{graphterm}

\subsection{Graphs}

Let $G=(V,E)$ be a graph with a set of vertices $V$ and a set of
edges $E\subseteq V\times V$. In our context, we use $V = \{1,\ldots ,p\}$
corresponding to some random variables $X_1,...,X_p$. 

A graph can be \emph{directed}, \emph{undirected} or \emph{partially
  directed}. An edge between two vertices, for example $i$ and $j$, is 
called \emph{directed} if the edge has an arrowhead: $i\leftarrow j$ or
$i\rightarrow j$. An edge without arrowhead is an \emph{undirected} edge:
$i- j$. A graph in which all edges are directed is called a \emph{directed
graph}; and vice-versa, a graph in which no edge is directed is called
an \emph{undirected graph}. A graph which may contain both directed and 
undirected edges is called a \emph{partially directed graph}. The
underlying undirected graph of a (partially) directed graph G which we
derive by removing all the arrowheads is called the 
\emph{skeleton} of G.\\ 
Two vertices $i$ and $j$ are \emph{adjacent} if there is any kind of
edge between them. The \emph{adjacency set} of a vertex $i$, denoted by
$adj(i,G)$, is the set of all vertices that are adjacent to $i$ in G. A
\emph{path} is a sequence of vertices $\left\{1,\ldots,k\right\}$ such
that $i$ is adjacent to $i+1$ for each $i=1,\ldots,k-1$. A \emph{directed
path} is a path with directed edges that follows the direction of the
arrows. When the first and last vertices coincide, the directed path is
called a \emph{directed cycle}. An \emph{acyclic} graph is a graph that contains no directed cycles. A directed graph with no directed cycles is called
\emph{directed acyclic graph} (DAG).\\  
If $i\rightarrow j$, then $i$ is
called a \emph{parent} of $j$ and $j$ is called a \emph{child} of
$i$. The set of parents of $i$ in G is denoted as $pa(i)$ and the set of
children as $ch(i)$. If there is a directed path from $i$ to $j$, then
$i$ is called an \emph{ancestor} of $j$ and $j$ is called an
\emph{descendant} of $i$. The set of ancestors of $i$ is denoted
as $an(i)$, the set of descendants as $de(i)$ and the set of
non-descendants as $nde(i)$. A \emph{v-structure} in a
graph G is an  ordered triple of vertices $(i,j,r)$, such that
$i\rightarrow j$ and $j\leftarrow r$, and $i$ and $r$ are not
adjacent in G. The vertex $j$ is then called a \emph{collider}.\\
A path $Q$ from $i$ to $j$ in a directed acyclic graph G is said to be
\emph{blocked} by S, if it contains a vertex $v\in Q$ such that either
(i) $v \in S$ and $v$ is no collider; or (ii) $v \notin S$ nor has $v$ any
descendants in S, and $v$ is a collider. A path that is not blocked by S is
said to be \emph{active}. Two subsets A and 
B are said to be \emph{d-separated} by S if all paths from A to B are
blocked by S. In other words, there is no active path from A to B. 

\subsection{Graphical models and Markov properties}\label{markov} 

Graphical models form a probabilistic tool to analyze and visualize
conditional dependence between random variables, using some encoding with
edges in the graph. Fundamental to the idea of a graphical model is,
based on graph theoretical concepts and algorithms, the notion of
modularity where a complex system is built by combining simpler parts.
One can distinguish between three main graphical models.    
Here we focus on DAG models, where all the edges of the graph are directed.
According to \cite{LauS96}, a DAG model may exhibit several directed Markov
properties. In the following, we present only two of them.\\
We use the following notation. Let $P$ denote the distribution of
$(X_1,...,X_p)$. For $x \in \R^p$, we denote by $x_A = \{x_j;\ j \in A\}$
for $A \subseteq V = \{1,\ldots ,p\}$ and analogously for the random
vector $X_A$. Furthermore, for disjoint subsets $A,B$ and $S$, we denote by
$X_A \Perp X_B|X_S$ conditional independence between $X_A, X_B$ given $X_S$.  

\begin{defi}\label{defi1}[Directed global Markov property]

Let $A,B$ and $S$ be disjoint subsets of $V$ and $G$ a DAG on $V$ . If  
\[
X_A \Perp X_B|X_S
\]
whenever A and B are d-separated by S in the graph $G$, we say P obeys the
directed global Markov property relative to the DAG $G$.   
\end{defi} 

\begin{defi}\label{defi2}[Recursive factorization property]
\label{fac}

We say that P admits a recursive factorization according to a DAG G whenever
there exist non-negative functions $f_i(.|.)\ (i=1,\ldots ,p)$, such that  

\[
\int f_i(x_i|x_{pa(i)})\nu(dx_i)=1
\]

and P has a density $f$ with respect to the measure $\nu$, where

\[
f(X_1,...,X_p)=\prod_{i=1}^pf_i(X_i|X_{pa(i)}).
\]

\end{defi}

If the density $f$ of $P$ is strictly positive, as for example in the case of
a multivariate Gaussian distribution, both Markov
properties in Definitions \ref{defi1} and \ref{defi2} are equivalent. For
more details see \cite[pp.46-52]{LauS96}.

A DAG model encodes conditional independence relationships via the notion
of d-separation. Several DAGs could encode the same set of conditional
independence relationships. These DAGs form an equivalence class,
consisting of 
DAGs with the same skeleton and v-structures. A \emph{complete partially
  directed acyclic graph} (CPDAG) uniquely describes such an equivalence
class. In fact, directed edges in the CPDAG are common to all DAGs
in the equivalence class. Undirected edges in the CPDAG correspond to edges that
are directed one way in some DAGs and another way in other DAGs of the
equivalence class. The absence of edges in the CPDAG means that
all DAG members in the equivalence class have no corresponding edge. 
If all the conditional independence relationships of a
distribution $P$ and no additional conditional independence relations, can be inferred from the graph, we say that the distribution
$P$ is \emph{faithful} to the DAG G. More precisely, if $P$ is faithful to
the DAG $G$: for any triple of disjoint sets $A,B$ and $S$ in $V$, 
\begin{eqnarray*}
 X_A \Perp X_B|X_S \Leftrightarrow A\ \mbox{and}\ B\ \mbox{are
   $d$-separated by}\ S\ \mbox{in}\ G. 
\end{eqnarray*}
Note that the directed global Markov property in Definition \ref{defi1} implies the
implication from the right- to the left-hand side; the other direction is
due to the faithfulness assumption. 
   
\section{Covariance estimation based on DAGs}
\label{meth}

Our methodology is based on two steps. We first infer the
CPDAG, i.e. the equivalence class of DAGs, and we then estimate the
covariance (concentration) matrix based on the CPDAG structure.

We assume throughout the paper that the data are
\begin{eqnarray}\label{data}
& & X^{(r)} = (X^{(r)}_{1},\ldots ,X^{(r)}_{p}),\ r=1,\ldots,n\nonumber\\
& & X^{(1)},\ldots, X^{(n)}\ \mbox{i.i.d.}\ \sim P
\end{eqnarray}
with $P$ being multivariate normal $\mathcal{N}_p(0,\Sigma)$, Markovian (as
in Definition \ref{defi1} or \ref{defi2}) and faithful to a DAG G. 

The Gaussian assumption implies that $\ERW{X_i\mid X_{pa(i)}}$ is
linear in $X_{pa(i)}$ which will be useful in the second estimation
step for the concentration or covariance matrix. Moreover, it allows us to
equate conditional independence with zero  
partial correlation which makes estimation for the CPDAG much easier. 

\subsection{Estimating the covariance matrix from a DAG}\label{estfromDAG}

We first assume that the underlying DAG is given. Using the factorization
property from Definition \ref{fac} in Section \ref{markov} we have:
\begin{equation*}
  f(X_1,...,X_p)=\prod_{i=1}^pf(X_i|X_{pa(i)}).
\end{equation*}
We use here and in the sequel the short-hand notation $f(\cdot|\cdot)$
instead of $f_i(\cdot|\cdot)$. 
For data as in (\ref{data}), we can then write the likelihood function as
\begin{equation*}
  L=\prod_{r=1}^nf(X^{(r)}_{1},...,X^{(r)}_{p})
  =\prod_{r=1}^n\prod_{i=1}^pf(X^{(r)}_{i}|X^{(r)}_{pa(i)})
  =\prod_{i=1}^p\prod_{r=1}^nf(X^{(r)}_{i}|X^{(r)}_{pa(i)}).  
\end{equation*}

Using the Gaussian assumption this leads to
the likelihood in terms of the unknown parameter $\Sigma$ (or $\Sigma^{-1}$
respectively).   

\begin{equation*}
  L(\Sigma)=\prod_{i=1}^pL_{i}(\mu_{i|pa(i)},\Sigma_{i|pa(i)}) 
\end{equation*}   

where $\mu_{i|pa(i)}$ and $\Sigma_{i|pa(i)}$ are the conditional
expectation and variance of $X_{i}$ given the
parents $X_{pa(i)}$. Note that the conditional covariance is a fixed
quantity whereas the conditional mean depends on the variables
$X_{pa(i)}$. For a single random variable $X_i$ we have:
\begin{equation}
  \label{bederw}
\begin{split}
  \mu_{i|pa(i)}=\ERW{X_i\mid X_{pa(i)}}&= \mu_i +
  \Sigma_{i,pa(i)}(\Sigma_{pa(i),pa(i)})^{-1}(X_{pa(i)}-\mu_{pa(i)})\\
&=\Sigma_{i,pa(i)}(\Sigma_{pa(i),pa(i)})^{-1}X_{pa(i)},
\end{split}
\end{equation}

with assumption $\mu_i = 0\ \forall i$ from above, and:
\begin{equation}
\label{condcov}
  \Sigma_{i|pa(i)}=\Sigma_{i,i}-\Sigma_{i,pa(i)}(\Sigma_{pa(i),pa(i)})^{-1}\Sigma_{pa(i),i}. 
\end{equation}   
The expressions $\Sigma_{i,pa(i)}$ and
$\Sigma_{pa(i),pa(i)}$ are sub-matrices formed by selecting the
corresponding rows and columns from the full covariance matrix
$\Sigma$. For example, $\Sigma_{i,pa(i)}$ is the sub-matrix (or vector) of
$\Sigma$ with row $i$ and columns $j \in pa(i)$. The values 
$\mu_{i|pa(i)}$ and $\Sigma_{i|pa(i)}$, in the $i$th factor $L_i$ of the
likelihood, are connected to regression, as described next.

Consider for each node $i$ a regression from
$X_i$ on $X_{pa(i)}$, where $X_i|X_{pa(i)} \sim
\mathcal{N}(\mu_{i|pa(i)},\Sigma_{i|pa(i)})$. We can represent these $p$
regressions in matrix notation as follows:
\begin{equation}
  \label{amatform}
  A\begin{bmatrix}X_1\\ \vdots\\ X_p\\ \end{bmatrix}=\epsilon    
\end{equation}   
where A is a $p\times p$ matrix corresponding to the regressions
and $\epsilon$ is the vector of the error terms. That is:
\begin{eqnarray}\label{mata}
A_{ij} = \left\{ \begin{array}{ll}
-(\Sigma_{i,pa(i)}(\Sigma_{pa(i),pa(i)})^{-1})_j &\
    \mbox{if}\ j \in pa(i)\nonumber\\
1 &\ \mbox{if}\ j=i\nonumber\\
0 &\ \mbox{otherwise} 
\end{array}.
\right.
\end{eqnarray}
Now we can easily compute $\Sigma$ or $\Sigma^{-1}$,
because we can write (\ref{amatform}) as 
\begin{equation*}
(X_1,\ldots ,X_p)^T = A^{-1}\epsilon.     
\end{equation*}  
Hence,
\begin{eqnarray*}
\Sigma = \COV{(X_1,\ldots ,X_p)^T} = \COV{A^{-1}\epsilon} =
A^{-1}\COV{\epsilon}(A^{-1})^T,
\end{eqnarray*}
where
\begin{eqnarray}\label{coveps}
\COV{\epsilon}=\begin{bmatrix}

 \Sigma_{1|pa(1)}      &  & 0\\
  & \ddots & \\
 0      &  & \Sigma_{p|pa(p)}
\end{bmatrix}.
\end{eqnarray}
We then also easily obtain
\[
\begin{split}
  \Sigma^{-1} = (A^{-1}\COV{\epsilon}(A^{-1})^T)^{-1}
  = A^T\COV{\epsilon}^{-1}A.
\end{split}  
\]
See \cite[Chap. 3]{COXW96} and \cite{Marchetti06}  for more details.

Since $\Sigma$ and $\Sigma^{-1}$ are based on the structure of the DAG G
(via the matrix A) we write $\Sigma_G$ and $\Sigma_G^{-1}$. An
estimator is now constructed as
follows. Consider the maximum likelihood 
estimator 
\begin{eqnarray}\label{MLEcov}
\hat{\Sigma}^{MLE} = n^{-1} \sum_{j=1}^n (X^{(j)} -
\overline{X})(X^{(j)} - \overline{X})^T
\end{eqnarray}
 as an ``initial'' estimator $\hat{\Sigma}_{init}$ of $\Sigma$ and use the
 following plug-in estimators:
\begin{eqnarray}\label{estcov}
& &\hat{\Sigma}_G = \hat{A}^{-1}\COVH{\epsilon}(\hat{A}^{-1})^T,\nonumber\\
& &\hat{\Sigma}_G^{-1} = \hat{A}^T\COVH{\epsilon}^{-1}\hat{A},
\end{eqnarray}
where $\hat{A}$ and $\COVH{\epsilon}$ are as in
(\ref{coveps}) but with the plug-in estimates
$\hat{\Sigma}^{MLE}_{i,pa(i)}$, $(\hat{\Sigma}^{MLE}_{pa(i),pa(i)})^{-1}$
(for $\hat{A}$) 
and $\hat{\Sigma}^{MLE}_{i|pa(i)}$ (for $\COVH{\epsilon}$) using formula
(\ref{condcov}). 

Note that the estimators in (\ref{estcov}) are automatically
positive semi-definite having eigenvalues $\ge 0$ (and positive definite
assuming $\hat{\Sigma}_{i|pa(i)}>0$ for all $j$, which would fail only in
very pathological cases). Furthermore, we
could use another ``initial'' estimator than $\hat{\Sigma}^{MLE}$ for
estimating $\Sigma_{i,pa(i)}$, $\Sigma_{pa(i),i}$ and
$\Sigma_{pa(i),pa(i)}$. We are exploiting this possibility for a robustified
version, as discussed in Section \ref{robu}. Finally, the estimator in
(\ref{estcov}) is implemented in the R-package \texttt{ggm} \cite{ChDR07}.

\subsection{Inferring a directed acyclic graph}

The conditional dependencies between $X_1,...,X_p$ and hence the DAG are
usually not known. We use the PC-algorithm \cite{SpiPGS00} with estimated
conditional dependencies to infer the corresponding CPDAG G, i.e. the
equivalence class of DAGs (inferring the true DAG itself is well-known
to be impossible due to identifiability problems). 

Estimation of the skeleton and partial orientation of edges are the two
major parts of inferring a CPDAG. 
In the following we will describe these two steps. 

\subsubsection{Estimating the CPDAG}\label{subsec.estCPDAG}

In a first step, we start from a complete undirected graph. When two
variables $X_i$ $X_j$ are found to be conditional independent given $X_K$
for some set $K$, the edge $i-j$ is deleted: details are given in Algorithm
\ref{pcskel}. In a second step, the edges are oriented using the
conditioning sets $K$ which made edges drop out in the first step: details
are given in Algorithm \ref{orient}.\\ 
In the first step of the PC-algorithm, we need to
estimate the conditional independence relations between
$X_1,...,X_p$. Under the Gaussian assumption conditional independencies can be 
inferred from partial correlations. Then, the conditional independence
of $X_i$ and $X_j$ given $X_K = \{X_r;r\in K\}$, where $K \subseteq
\{1,...,p\}\backslash \{i,j\}$, is equivalent to the following: the partial
correlation of $X_i$ and $X_j$ given $\{X_r;r\in K\}$, denoted by
$\rho_{i,j|K}$, is equal to zero. This is an elementary  
property of the multivariate normal distribution, see
\cite[Prop. 5.2]{LauS96}. Hence to obtain estimates of conditional
independencies we can use estimated partial correlations
$\hat{\rho}_{i,j|K}$. For testing whether an estimated partial 
correlation is zero or not, we apply Fisher's z-transform
\begin{equation*}
  Z(i,j\mid K)=\frac{1}{2}\log \left(\frac{1+\hat{\rho}_{ij|K}}{1-\hat{\rho}_{i,j|K}}\right).
\end{equation*} 
 
Since $Z(i,j\mid K)$ has a $\mathcal{N}(0,(n-|K|-3)^{-1})$ distribution if
$\rho_{i,j|K}=0$ \cite{AndT84}, we have evidence that
$\rho_{i,j|K}\neq 0$ if 
\begin{equation*}
  \sqrt{n-|K|-3}|Z(i,j\mid K)|>\Phi^{-1}(1-\frac{\alpha}{2}),
\end{equation*} 

where $\Phi$ is the cumulative distribution function of the standard Normal
distribution and the significance level $0<\alpha <1$ is a tuning
(threshold) parameter of the PC-algorithm described in Algorithms
\ref{pcskel} and \ref{orient}. 

\begin{algorithm}[h]
\caption{The PC-algorithm for the skeleton}
\label{pcskel}
\KwIn{z-transform of estimated partial correlations, tuning parameter $\alpha$}
\KwOut{Skeleton of CPDAG G, separation sets S (used later for directing the
  skeleton)} 
Form the complete undirected graph $\tilde{G}$ on the set
$\{1,\ldots ,p\}$\;
$l=-1$; $G=\tilde{G}$\;
\Repeat{for each ordered pair of adjacent nodes $i$,$j$: $|adj(i,G)\backslash\{j\}|<l$}{
  $l=l+1$\;
  \Repeat{all ordered pairs of adjacent variables $i$ and $j$, such that
    $|adj(i,G)\backslash\{j\}|\geq l$ and $K\subseteq
    adj(i,G)\backslash\{j\}$ with $|K|=l$, have been tested for conditional
    independence}{
    Select an ordered pair of adjacent variables $i$, $j$ in G such that $|adj(i,G)\backslash\{j\}|\geq l$\; 
    \Repeat{edge $i$, $j$ is deleted or all $K\subseteq
      adj(i,G)\backslash\{j\}$ with $|K|=l$ have been chosen}{
      Choose $K\subseteq adj(i,G)\backslash\{j\}$ with $|K|=l$\;
      \If{$\sqrt{n-|K|-3}|Z(i,j\mid K)| \leq \Phi^{-1}(1-\alpha/2)$}{
        Delete edge $i$, $j$\;
        Denote this new graph by G\;
        Save $K$ in $S(i,j)$ and $S(j,i)$\;      
      }
    }
  }
}
\end{algorithm}

If $\rho_{i,j|K}=0$ is plausible, the edge $i-j$ is deleted and K is saved
in $S(i,j)$. We call $S=\{S(i,j); i,j\in \{1,\dots ,p\},\ i\neq j\}$
the separation sets. These sets are important for extending the 
estimated skeleton to a CPDAG as described below in Algorithm \ref{orient}.
 
\begin{algorithm}[h]
\caption{The PC-algorithm: extending the skeleton to a CPDAG}
\label{orient}
\KwIn{Skeleton G of CPDAG, separation sets $S$}
\KwOut{CPDAG}
\ForAll{pairs of nonadjacent variables $i$, $j$ with common neighbor
  $k$}{
  \If {$k \notin S(i,j)$}{
    Replace $i - k - j$ in Skeleton of G by $i \rightarrow k
    \leftarrow j$\;
  }
}

\Repeat{no more orienting of undirected edges is possible by the rules
  $\mathbf{R1}$ to $\mathbf{R3}$}{
$\mathbf{R1}$ Orient $j-k$ into $j\rightarrow k$ whenever there is an arrow 
$i\rightarrow j$ such that $i$ and $k$ are nonadjacent\;
$\mathbf{R2}$ Orient $i-j$ into $i\rightarrow j$ whenever there is a chain $i\rightarrow k \rightarrow j$\;
$\mathbf{R3}$ Orient $i-j$ into $i\rightarrow j$ whenever there are two chains
$i\rightarrow k \rightarrow j$ and $i\rightarrow l \rightarrow j$ such that
$k$ and $l$ are nonadjacent\;
% $\mathbf{R4}$ Order $i-j$ into $i\\rightarrow j$ whenever there are two
% chains $i-k \rightarrow l$ and $k\rightarrow l \rightarrow j$ such that
% $k$ and $l$ are nonadjacent.\;
}
\end{algorithm}
\cite{Meek95} showed that the rules in Algorithm \ref{orient} are sufficient
to orient all arrows in the CPDAG, see also \cite[pp.50]{PeaJ08}. The
PC-algorithm, described in Algorithms \ref{pcskel} and \ref{orient}, yields
an estimate $\hat{G}_{\mathrm{CPDAG}}(\alpha)$ of the true underlying CPDAG which
depends on the tuning parameter $\alpha$. 

\subsubsection{The PC-DAG covariance estimator} 

Having an estimate $\hat{G}_{\mathrm{CPDAG}}(\alpha)$ of the CPDAG, we pick
any DAG $\hat{G}_{\mathrm{DAG}}(\alpha)$ in the equivalence class of the
CPDAG. This can be done by directing undirected edges in the CPDAG at
random without creating additional v-structures or cycles. The estimate for
the covariance and concentration matrix is then: 
\begin{eqnarray}\label{DAGest-cov}
\hat{\Sigma}_{\hat{G}_{\mathrm{DAG}}(\alpha)},\
\hat{\Sigma}^{-1}_{\hat{G}_{\mathrm{DAG}}(\alpha)}\ \mbox{as in formula
  (\ref{estcov})},
\end{eqnarray}
and since the PC-algorithm for DAGs is involved, we call it the PC-DAG
covariance estimator. Its only tuning parameter is $\alpha$ used in the
PC-algorithm. As described in Section \ref{subsec.estCPDAG}, it has the
interpretation of a significance level for 
a single test whether a partial correlation is zero or not. The choice of
this tuning parameter $\alpha$ can be done using cross-validation of  the
negative out-of-sample log-likelihood. 

We remark that the zeros in
$\hat{\Sigma}^{-1}_{\hat{G}_{\mathrm{DAG}}(\alpha)}$ are the same for any
choice of a DAG in the estimated CPDAG
$\hat{G}_{\mathrm{CPDAG}}(\alpha)$. However, the non-zero estimated
elements of the estimated matrices will be slightly
different. To avoid an unusual random realization when selecting a DAG
from $\hat{G}_{\mathrm{CPDAG}}(\alpha)$, we can sample many DAGs and
average the corresponding estimates for $\Sigma^{-1}$ or $\Sigma$. 

In some cases, we need some small modifications of the P{C-DAG covariance
  estimator which are described in Appendix \ref{sec.modif}. Estimation of
  a CPDAG as described in Algorithm \ref{pcskel} and \ref{orient} is
  efficiently implemented in the \texttt{R}-package \texttt{pcalg}, as
  described in its reference manual \cite{pcalgMAN}. 

\section{Consistency}
\label{consi}

We prove asymptotic consistency of the estimation method in
high-dimensional settings where the number of variables $p$ can be
much larger than the sample size $n$. In such a framework, the model
depends on $n$ and this is reflected notationally by using the subscript
$n$. We assume:
 
\begin{itemize}

\item[(A)] The data is as in (\ref{data}) with distribution $P_n$ of $(X_{1},...,X_{p_n})$ being multivariate
  normal $\mathcal{N}(0,\Sigma_n)$, Markovian as in Definition \ref{defi1}
  or \ref{defi2} and faithful to a DAG $G_n$. 

\item[(B)] The variances satisfy: $\VAR{X_i} = \sigma^2_{n;i} \le \sigma^2 <
  \infty$ for all $i=1,\ldots ,p_n$. 

\item[(C)] The dimension $p_n= O(n^a)$ for some $0 \le a < \infty$.

\item[(D)] The maximal cardinality
  $q_n=\max_{i=1,...,p_n}\left|adj(i,G_n)\right|$ of the adjacency sets in
  $G_n$ satisfies  $q_n= O(n^{\frac{1}{2}-b})$ for some $0<b\le 1/2$.

\item[(E)] For any $i,j\in {1,...,p_n}$, let $\rho_{n;i,j|S}$ denote the
  partial correlation between $X_{i}$ and $X_{j}$ given $S$, where $S\in
  \{1,\ldots ,p_n\}\setminus{\{i,j\}}$. These partial correlations
  are bounded above and below:
  \begin{equation*}
    \sup_{n,i\ne j,S}\left|\rho_{n;i,j|S}\right|\le M 
  \end{equation*}
for some $M<1$, and 
\begin{equation*}
    \inf_{i,j,S}\left\lbrace\left|\rho_{n;i,j|S}\right|; \rho_{n;i,j|S}\ne
      0\right\rbrace\geq c_n 
  \end{equation*}
with $c_n^{-1}= O(n^d)$ for some $0<d<1/4 + b/2$, where $b$ is as in (D).

\item[(F)] For every DAG in the equivalence class of the true underlying
  CPDAG (induced by the distribution in assumption (A)), the conditional
  variances satisfy the following bound:\\
\begin{equation*}
\begin{split}
&\inf_{1 \le i \le p_n,\ j \in pa(i)} \VAR{X_{j}\mid X_{pa(i)\setminus
    j}} \geq r > 0,\\ 
&\inf_{1 \le i \le p_n} \VAR{X_{i}\mid X_{pa(i)}} \geq r > 0.
\end{split} 
\end{equation*}  
\end{itemize}

Assumption (C) allows the number of variables $p_n$ to grow as an arbitrary
polynomial in the sample size and reflects the high-dimensional 
setting. Assumption (D) is a sparseness assumption, requiring that the
maximal number of neighbors per node grows at a slower rate than
$O(n^{\frac{1}{2}})$. Assumption (F) is a regularity condition on
the conditional variances. Assumption (E), in particular the second part,
is a restriction which corresponds to the detectability of non-zero
partial correlations: obviously, we cannot consistently detect non-zero
partial correlations of smaller order than $\frac{1}{\sqrt{n}}$. For
sparse graphs with $b$ close to $1/2$ in (D), the value $d$ close to $1/2$
is allowed. i.e. close to the $1/\sqrt{n}$ detection limit. Under
assumptions (A)-(E), the PC-algorithm was shown to be consistent for
inferring the true underlying CPDAG \cite[Th.2]{DAGPC07}.  
More precisely, we denote by $\hat{G}_{\mathrm{CPDAG};n}(\alpha)$ the
estimate for the underlying CPDAG, using the PC-algorithm with tuning
parameter $\alpha$ (Algorithms \ref{pcskel} and \ref{orient}), and by
$G_{\mathrm{CPDAG};n}$ the true underlying CPDAG. Then, assuming (A)-(E)
and for $\alpha_n=2(1-\Phi(n^{1/2}c_n/2))$:
\begin{eqnarray}\label{consi0.pc}
P[\hat{G}_{\mathrm{CPDAG};n}(\alpha_n) = G_{\mathrm{CPDAG};n}] \to 1\ (n
\to \infty). 
\end{eqnarray}

Concerning the consistency of DAG based estimation of the concentration
matrix, we have the following new result.
\begin{lemma}
Under assumptions (A)-(D) and (F) the following holds. For any DAG $G$ in
the equivalence class of the true underlying CPDAG and using the estimator
$\hat{\Sigma}^{-1}_{G}$ in (\ref{estcov}): 
\label{lemma:consistenz}
\begin{eqnarray*}
  \label{eq:consistenz}
  \sup_{i,j}\left|\hat{\Sigma}^{-1}_{G,n;i,j}-\Sigma^{-1}_{n;i,j}\right|\xrightarrow{P}
  0\ (n\rightarrow \infty).  
\end{eqnarray*}
\end{lemma}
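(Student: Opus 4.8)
First I would note that the DAG $G$ here is \emph{given} — an arbitrary but fixed member of the equivalence class of the true CPDAG — so nothing about the PC-algorithm, and in particular not assumption (E), is needed; only the second, non-regularized estimation step has to be controlled. The plan is to reduce the statement to the entrywise concentration of $\hat{\Sigma}^{MLE}$ together with a deterministic sensitivity analysis of small least-squares problems. Write $\Sigma^{-1}_{G}=A^{T}D^{-1}A$ with $D=\mathrm{Cov}(\epsilon)$ diagonal, $D_{kk}=\mathrm{Var}(X_{k}\mid X_{pa(k)})$, and $\hat{\Sigma}^{-1}_{G}=\hat{A}^{T}\hat{D}^{-1}\hat{A}$ with the \emph{same} sparsity pattern, since by (\ref{mata}) the entry $A_{kj}$ (and $\hat{A}_{kj}$) is non-zero only for $j=k$ or $j\in pa(k)$. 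Expanding entrywise, $(\Sigma^{-1}_{G})_{ij}=\sum_{k}A_{ki}D_{kk}^{-1}A_{kj}$, and $A_{ki}A_{kj}\neq 0$ forces $k\in(\{i\}\cup ch(i))\cap(\{j\}\cup ch(j))$; as the skeleton — hence $ch(i)$ — is common to all DAGs in the class and $|ch(i)|\le q_{n}$ by (D), at most $q_{n}+1$ summands are non-zero, and likewise for $\hat{\Sigma}^{-1}_{G}$. Assumptions (B) and (F) bound all factors uniformly: $D_{kk}^{-1}\le 1/r$, and the non-zero off-diagonal entries obey $|A_{kj}|=|(\Sigma_{k,pa(k)}(\Sigma_{pa(k),pa(k)})^{-1})_{j}|\le\sigma^{2}/r$ (by the Frisch--Waugh form $A_{kj}=-\mathrm{Cov}(X_{k},\tilde{X}_{j})/\mathrm{Var}(\tilde{X}_{j})$ with $\tilde{X}_{j}$ the residual of $X_{j}$ on $X_{pa(k)\setminus j}$ and $\mathrm{Var}(\tilde{X}_{j})=\mathrm{Var}(X_{j}\mid X_{pa(k)\setminus j})\ge r$). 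Hence, by the triangle inequality and the product rule, it suffices to prove
\[
q_{n}\sup_{k,j}\bigl|\hat{A}_{kj}-A_{kj}\bigr|\xrightarrow{P}0,\qquad q_{n}\sup_{k}\bigl|\hat{D}_{kk}-D_{kk}\bigr|\xrightarrow{P}0,
\]
the second transferring to $\hat{D}_{kk}^{-1}$ because $D_{kk}\ge r>0$.

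Next I would reduce everything to the behaviour of $\hat{\Sigma}^{MLE}$. By (\ref{mata}) and (\ref{condcov}), $A_{kj}$ is a coordinate of the population regression vector $\beta^{(k)}=\Sigma_{k,pa(k)}(\Sigma_{pa(k),pa(k)})^{-1}$ and $D_{kk}$ the corresponding conditional variance, with $\hat{A},\hat{D}$ obtained by plugging the relevant sub-blocks of $\hat{\Sigma}^{MLE}$ into the same formulas. A standard Gaussian concentration bound plus a union bound over the $p_{n}^{2}$ entries, using (B) and the polynomial growth (C), gives
\[
\delta_{n}:=\sup_{i,j}\bigl|\hat{\Sigma}^{MLE}_{n;ij}-\Sigma_{n;ij}\bigr|=O_{P}\!\Bigl(\sqrt{\tfrac{\log p_{n}}{n}}\Bigr)=O_{P}\!\Bigl(\sqrt{\tfrac{\log n}{n}}\Bigr)\to 0
\]
(centring by $\overline{X}$ contributes only an $O_{P}(\log p_{n}/n)$ term since the mean is $0$). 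What remains is purely deterministic: how large a perturbation of $\beta^{(k)}$ and of $D_{kk}$ is produced by perturbing each entry of the principal sub-block indexed by $\{k\}\cup pa(k)$ — of dimension at most $q_{n}+1$ — by at most $\delta_{n}$?

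The third step is this sensitivity analysis, whose core is a matrix inversion. With $M=\Sigma_{pa(k),pa(k)}$ and $\hat{M}=\hat{\Sigma}^{MLE}_{pa(k),pa(k)}$ one has $\|\hat{M}-M\|_{\mathrm{op}}\le q_{n}\delta_{n}$, while (F) yields $[M^{-1}]_{jj}=1/\mathrm{Var}(X_{j}\mid X_{pa(k)\setminus j})\le 1/r$, hence $\|M^{-1}\|_{\mathrm{op}}\le\mathrm{tr}(M^{-1})\le q_{n}/r$. Once $\|M^{-1}\|_{\mathrm{op}}\|\hat{M}-M\|_{\mathrm{op}}\to 0$ — this is where the sparseness assumption (D) is used, forcing $q_{n}$ to be small relative to $\delta_{n}^{-1}=\tilde{O}(n^{1/2})$ — the perturbation bound $\|\hat{M}^{-1}-M^{-1}\|_{\mathrm{op}}\le\|M^{-1}\|_{\mathrm{op}}^{2}\|\hat{M}-M\|_{\mathrm{op}}/\bigl(1-\|M^{-1}\|_{\mathrm{op}}\|\hat{M}-M\|_{\mathrm{op}}\bigr)$ controls the inverse sub-block, uniformly over $k$ and over all DAGs in the class (since (B), (D), (F) are uniform there). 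Combined with $\|\hat{\Sigma}^{MLE}_{k,pa(k)}-\Sigma_{k,pa(k)}\|_{2}\le\sqrt{q_{n}}\,\delta_{n}$ and $\|\Sigma_{k,pa(k)}\|_{2}\le\sqrt{q_{n}}\,\sigma^{2}$, this bounds $\|\hat{\beta}^{(k)}-\beta^{(k)}\|_{2}$, hence $\sup_{j}|\hat{A}_{kj}-A_{kj}|$; inserting the same estimates into (\ref{condcov}) bounds $|\hat{D}_{kk}-D_{kk}|$. Reading off the rate, multiplying by the factor $q_{n}$ from the first step, and checking via (C) and (D) that the product still vanishes in probability completes the proof.

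The step I expect to be the real obstacle is the uniform control of the sub-block inverses $(\hat{\Sigma}^{MLE}_{pa(k),pa(k)})^{-1}$: assumptions (B) and (F) do \emph{not} keep $\Sigma_{pa(k),pa(k)}$ uniformly well-conditioned — its smallest eigenvalue may decay like $r/q_{n}$ — so the perturbation argument genuinely has to trade the growth of $q_{n}$ against the $\tilde{O}(n^{-1/2})$ size of the uniform estimation error, and it is the sparseness rate (D) together with the polynomial growth (C) (which keeps $\log p_{n}$ of order $\log n$) that has to do the work here. The remaining ingredients — Gaussian concentration of $\hat{\Sigma}^{MLE}$, boundedness of the regression coefficients and conditional variances from (B) and (F), and the $O(q_{n})$ bound on the number of non-zero terms in each entry of $A^{T}D^{-1}A$ — are routine.
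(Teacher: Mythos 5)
Your first step---expanding $\hat{\Sigma}^{-1}_{G;i,j}-\Sigma^{-1}_{G;i,j}$ entrywise through $A^{T}D^{-1}A$, observing that only $O(q_n)$ summands are non-zero, and reducing the claim to $q_n\sup_{k,j}|\hat{A}_{kj}-A_{kj}|\to 0$ and $q_n\sup_{k}|\hat{D}_{kk}^{-1}-D_{kk}^{-1}|\to 0$---is exactly the paper's reduction. The gap is in how you then control $\hat{A}-A$. You go through the uniform entrywise error $\delta_n=O_P(\sqrt{\log n/n})$ of $\hat{\Sigma}^{MLE}$ and an operator-norm perturbation of $M=\Sigma_{pa(k),pa(k)}$, using $\|M^{-1}\|_{\mathrm{op}}\le \mathrm{tr}(M^{-1})\le q_n/r$ and $\|\hat{M}-M\|_{\mathrm{op}}\le q_n\delta_n$. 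But then the quantity your Neumann-series bound needs to be $o(1)$, namely $\|M^{-1}\|_{\mathrm{op}}\|\hat{M}-M\|_{\mathrm{op}}$, is only $O_P\bigl(q_n^{2}\sqrt{\log n/n}\bigr)=O_P\bigl(n^{1/2-2b}\sqrt{\log n}\bigr)$, which diverges whenever $b\le 1/4$, while assumption (D) allows every $b\in(0,1/2]$. Tracking the powers of $q_n$ through your final bound on $q_n\|\hat{\beta}^{(k)}-\beta^{(k)}\|_2$ makes the requirement even more stringent (roughly $q_n^{9/2}\delta_n\to 0$). So as written the argument establishes the lemma only for graphs substantially sparser than (D) permits; you correctly identify this as ``the real obstacle,'' but the assertion that (C) and (D) then ``do the work'' is precisely the step that fails.

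The paper sidesteps the ill-conditioning of $\Sigma_{pa(k),pa(k)}$ by never taking an operator norm of a (sample) inverse Gram matrix. It uses the exact finite-sample distribution of the OLS coefficient: conditionally on the design, $\hat{\beta}^{(i)}_{j}$ is Gaussian with mean $\beta^{(i)}_{j}$ and variance $\sigma^{2}_{i|pa(i)}/\bigl((n-|s(i,j)|-1)\hat{\sigma}^{2}_{j|s(i,j)}\bigr)$ with $s(i,j)=pa(i)\setminus j$, so only a single diagonal entry of the inverse sample Gram matrix enters, and it is the reciprocal of a sample conditional variance whose population value assumption (F) bounds below by $r$; a Bernstein/chi-squared bound keeps the sample version above $r/2$ with exponentially high probability. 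This yields a tail bound of order $q_n^{2}p_n\exp(-C\epsilon^{2}n/q_n^{2})$, and since $n/q_n^{2}\asymp n^{2b}\to\infty$ the exponential beats all polynomial factors for every $b>0$ (an analogous argument handles $\hat{\sigma}^{2}_{i|pa(i)}$). If you wish to keep your ``concentration of $\hat{\Sigma}^{MLE}$ plus deterministic perturbation'' architecture, you would need a per-coordinate (diagonal-entry) control of $\hat{M}^{-1}$ rather than an operator-norm one --- which essentially brings you back to the paper's conditional-variance computation.
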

A proof is given in the Appendix. 
We then obtain the main theoretical result. 

\begin{thm}
Under assumptions (A)-(F) and using the tuning parameter
$\alpha_n=2(1-\Phi(n^{1/2}c_n/2))$ in the PC-algorithm, the 
following holds for the estimator in (\ref{DAGest-cov}): 
\label{thm:consistenz}
\begin{eqnarray*}
\sup_{i,j}\left|\hat{\Sigma}^{-1}_{\hat{G}_{\mathrm{DAG}}(\alpha),n;i,j} -
    \Sigma^{-1}_{n;i,j} \right|\xrightarrow{P} 0\ (n\rightarrow \infty). 
\end{eqnarray*}
\end{thm}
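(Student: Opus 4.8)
The plan is to combine the two consistency ingredients already available: the structural consistency of the PC-algorithm, stated in (\ref{consi0.pc}), and the parameter-level consistency of the DAG-based concentration estimator, stated in Lemma \ref{lemma:consistenz}. The key observation is that on the event $\{\hat{G}_{\mathrm{CPDAG};n}(\alpha_n) = G_{\mathrm{CPDAG};n}\}$, the randomly selected DAG $\hat{G}_{\mathrm{DAG}}(\alpha_n)$ lies in the equivalence class of the true underlying CPDAG, so that $\hat{\Sigma}^{-1}_{\hat{G}_{\mathrm{DAG}}(\alpha_n),n}$ coincides with the estimator $\hat{\Sigma}^{-1}_{G,n}$ for some fixed-in-the-equivalence-class DAG $G$, to which Lemma \ref{lemma:consistenz} applies. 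Concretely, fix $\varepsilon > 0$ and write
\[
P\Bigl[\sup_{i,j}\bigl|\hat{\Sigma}^{-1}_{\hat{G}_{\mathrm{DAG}}(\alpha_n),n;i,j} - \Sigma^{-1}_{n;i,j}\bigr| > \varepsilon\Bigr]
\le P[\hat{G}_{\mathrm{CPDAG};n}(\alpha_n) \ne G_{\mathrm{CPDAG};n}]
+ P\Bigl[\bigl\{\sup_{i,j}|\cdots| > \varepsilon\bigr\} \cap \{\hat{G}_{\mathrm{CPDAG};n}(\alpha_n) = G_{\mathrm{CPDAG};n}\}\Bigr].
\]
The first term tends to $0$ by (\ref{consi0.pc}) (valid under (A)--(E), which are all assumed here). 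For the second term, note that on the intersection event the estimate equals $\hat{\Sigma}^{-1}_{G,n}$ for a DAG $G$ in the true equivalence class; hence this probability is bounded above by $\max_{G}P[\sup_{i,j}|\hat{\Sigma}^{-1}_{G,n;i,j} - \Sigma^{-1}_{n;i,j}| > \varepsilon]$ over the finitely many DAGs $G$ in the equivalence class, and each term goes to $0$ by Lemma \ref{lemma:consistenz} (which needs only (A)--(D) and (F), all assumed). Since the number of DAGs in a fixed equivalence class is finite, the maximum of finitely many vanishing sequences still vanishes, so the second term tends to $0$ as well.

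A small technical point worth spelling out is the handling of the randomization used to pick $\hat{G}_{\mathrm{DAG}}(\alpha_n)$ from $\hat{G}_{\mathrm{CPDAG}}(\alpha_n)$: the bound above should be read as holding conditionally on the data, uniformly over the random choice of DAG, and then one takes expectations over the randomization; since the bound does not depend on which DAG in the equivalence class is selected, this causes no difficulty. If instead one averages several sampled DAGs (as suggested in Section \ref{subsec.estCPDAG}), the same argument applies entrywise by the triangle inequality, since each summand is uniformly close to $\Sigma^{-1}_{n}$ on the good event.

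The main obstacle is not in this wrapper argument — which is essentially a union bound — but in Lemma \ref{lemma:consistenz} itself, whose proof is deferred to the Appendix; there one must control the plug-in error propagated through the matrix inverses $(\hat{\Sigma}^{MLE}_{pa(i),pa(i)})^{-1}$ and the estimated conditional variances $\hat{\Sigma}^{MLE}_{i|pa(i)}$ uniformly over all $p_n$ nodes, using that $q_n = O(n^{1/2 - b})$ keeps these sub-blocks well-conditioned (via assumptions (B), (E), (F)) and that $p_n = O(n^a)$ is polynomial so a union bound over nodes only costs a polynomial factor against Gaussian concentration of $\hat{\Sigma}^{MLE}$. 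For the present theorem, however, Lemma \ref{lemma:consistenz} may be invoked as a black box, and the remaining work is exactly the two-term decomposition above.
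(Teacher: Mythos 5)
Your proposal is correct and follows essentially the same route as the paper: the paper's proof defines the event $A_n = \{\hat{G}_{\mathrm{CPDAG};n}(\alpha_n) = G_{\mathrm{CPDAG};n}\}$, notes $P[A_n]\to 1$ by (\ref{consi0.pc}), and observes that on $A_n$ the selected DAG lies in the true equivalence class so that Lemma \ref{lemma:consistenz} applies — exactly your two-term decomposition. Your additional remarks on the randomization of the DAG choice and the maximum over the equivalence class are refinements the paper leaves implicit (justified because the probability bounds in the lemma's proof are uniform over the choice of DAG), but the argument is the same.
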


Proof: The estimate $\hat{G}_{\mathrm{DAG};n}(\alpha)$ is a DAG element of the
estimated equivalence class encoded by the estimated CPDAG 
$\hat{G}_{\mathrm{CPDAG};n}(\alpha)$. Denote this DAG by $G_*$. Consider
the event  
\begin{eqnarray*}
A_n = \{\hat{G}_{\mathrm{CPDAG};n}(\alpha_n) = G_{\mathrm{CPDAG};n}\},
\end{eqnarray*}
whose probability $\PR{A_n} \to 1\ (n \to \infty)$, see
(\ref{consi0.pc}). On $A_n$, $G_*$ must be a DAG element of the true
equivalence class $G_{\mathrm{CPDAG};n}$ and hence on $A_n$, Lemma
\ref{lemma:consistenz} yields consistency:
\begin{eqnarray*}
\sup_{i,j} \left|\hat{\Sigma}^{-1}_{\hat{G},n;i,j} - \Sigma^{-1}_{n;i,j} \right| =
\sup_{i,j} \left|\hat{\Sigma}^{-1}_{G_*,n;i,j} - \Sigma^{-1}_{n;i,j}
\right| \xrightarrow{P} 0\ (n\rightarrow \infty).
\end{eqnarray*}
Since $P[A_n] \to 1$, the proof is complete.\hfill$\Box$ 

\section{Simulation and real data analysis}\label{simu}

We examine the behavior of our PC-DAG estimator using simulated and
real data and compare it to the Glasso method \cite{FHT07,BGA08}.
The Glasso is defined as:

\begin{equation}
\label{logliki}
\hat{\Sigma}_{\mathrm{Glasso}}^{-1} = \argmin_{\Sigma^{-1}\
  \mbox{non-neg. def.}}
(-\log{\det{\Sigma^{-1}}}+\trace{(\hat{\Sigma}^{MLE}\Sigma^{-1})}
+\lambda\|\Sigma^{-1}\|_1)    
\end{equation}
where $\hat{\Sigma}^{MLE}$ is the empirical
covariance matrix in (\ref{MLEcov}), $\|\Sigma^{-1}\|_1 = \sum_{i<j}
|\Sigma_{ij}^{-1}|$ and the minimization is over non-negative definite 
matrices. 
   
All computations are done with the \texttt{R}-packages \texttt{pcalg}
\cite{pcalgMAN} and \texttt{glasso}. 

\subsection{Simulation study}

We consider a DAG and a non-DAG model for generating the data.  

\subsubsection{DAG models}\label{dagmodel}

We focus on the following class of DAG models. We generate recursively
\begin{equation*}
\begin{split}
  &X_1 = \epsilon_1\sim \mathcal{N}(0,1),\\
  &X_i=\sum_{r=1}^{i-1}B_{ir}X_r+\epsilon_i\ (i=2,\dots,p),
\end{split}
\end{equation*}
where $\epsilon_1,\ldots,\epsilon_p$ i.i.d. $\sim \mathcal{N}(0,1)$ and
$B$ is an adjacency matrix generated as follows. We first fill
the matrix $B$ with zeros and replace every matrix entry in the
lower triangle by independent realizations of Bernoulli($s$) random
variables with success probability $s$ where $0<s<1$. 
Afterwards, we replace each entry having a 1 in the matrix $B$ by
independent realizations of a Uniform([0.1,1]) random variable. 
If $i<j$ and $B_{ji}\neq 0$ the corresponding DAG has a directed edge from
node $i$ to node $j$. The variables $X_1,\ldots ,X_p$ have a multivariate
Gaussian distribution with mean zero and covariance $\Sigma$ which can be
computed from $B$. We consider this model for different settings of $n$,
$s$ and $p$:

\begin{itemize}
\item[D1:] $n=30$, $s=0.01$, $p=40,50,60,70,80,90,100,110,120$ 
\item[D2:] $n=50$, $s=0.01$, $p=40,50,60,70,80,90,100,110,120$ 
\item[D3:] $n=30$, $s=0.05$, $p=40,50,60,70,80,90,100,110,120$
\item[D4:] $n=50$, $s=0.05$, $p=40,50,60,70,80,90,100,110,120$ 
\end{itemize}

The settings D1 to D4 mainly differ in the sparsity $s$ of the generated
data, which is related to the expected neighborhood size $\ERW{adj(i,G)}=s(p-1)$
for all $i$. For each of these settings we estimate the covariance and the
concentration matrix with both methods, our PC-DAG and the Glasso estimator.
 
We use two different performance measures
to compare the two estimation techniques. First, the Frobenius norm of the
difference between the estimated and the true matrix $\|\hat{\Sigma}-\Sigma\|_F$
and $\|\hat{\Sigma}^{-1}-\Sigma^{-1}\|_F$.
And second, the Kullback-Leibler Loss $\Delta_{KL}(\hat{\Sigma}^{-1},\Sigma^{-1})
=tr(\Sigma\hat{\Sigma}^{-1})-\log|\Sigma\hat{\Sigma}^{-1}|-p$.  

We sample data $X^{(1)},\ldots, X^{(n)}$ i.i.d. from the DAG model
described above for each value 
of $p$ in settings D1-D4. Then we derive, on a separate validation data-set
${X^{(1)}}^*,\ldots,{X^{(n)}}^*$, the optimal value of the tuning parameters
$\alpha$ (PC-DAG) or $\lambda$ (Glasso), with respect to the
negative Gaussian log-likelihood. The two different 
performance measures are evaluated for the estimates based on the training
data $X^{(1)},\ldots, X^{(n)}$ with optimal tuning parameter choice
based on the validation data. All 
results are based on 50 independent simulation runs.

\begin{figure}[h]
        \centerline{ 
          \subfigure[For setting D1]{\includegraphics[scale=0.3]{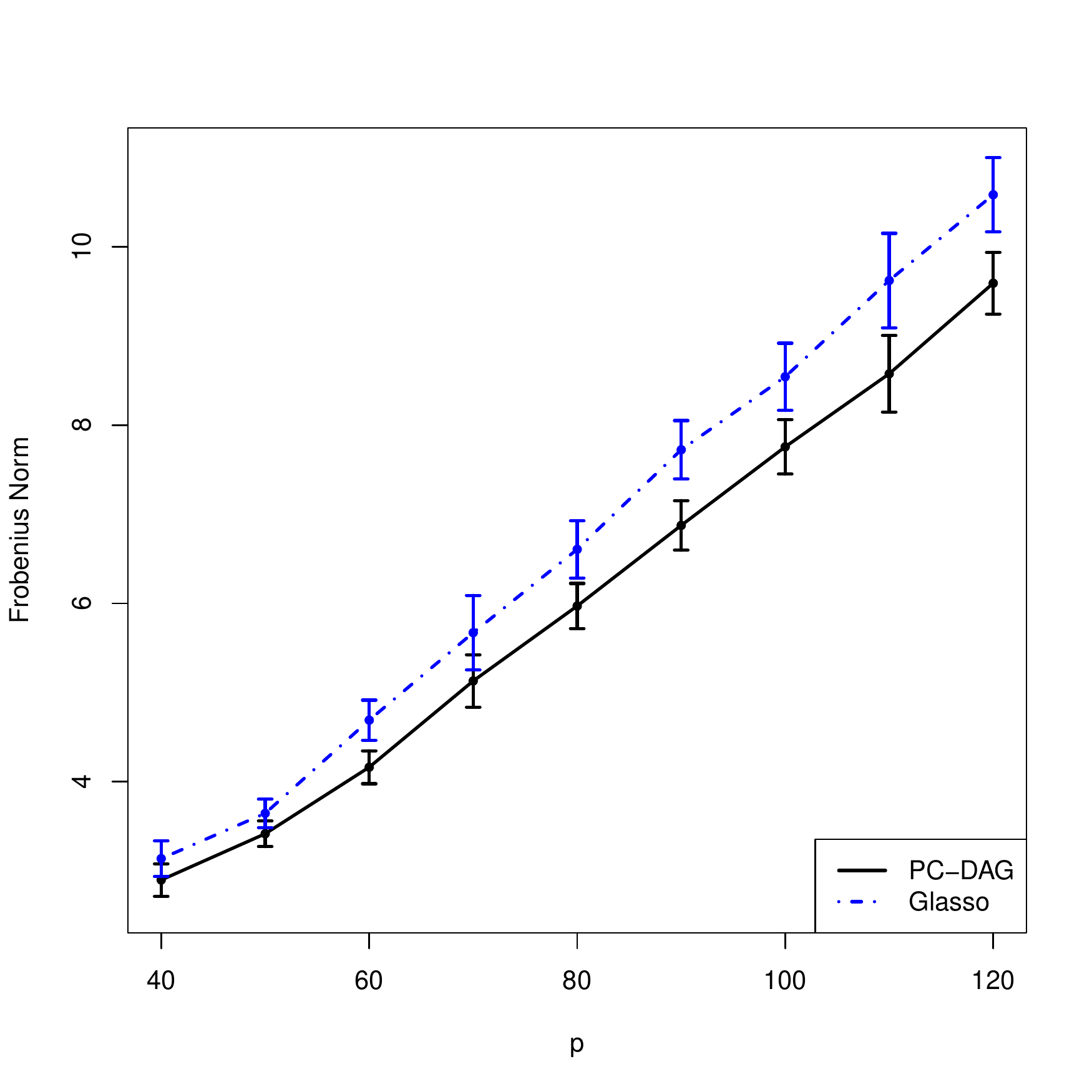}} 
          \subfigure[For setting D2]{\includegraphics[scale=0.3]{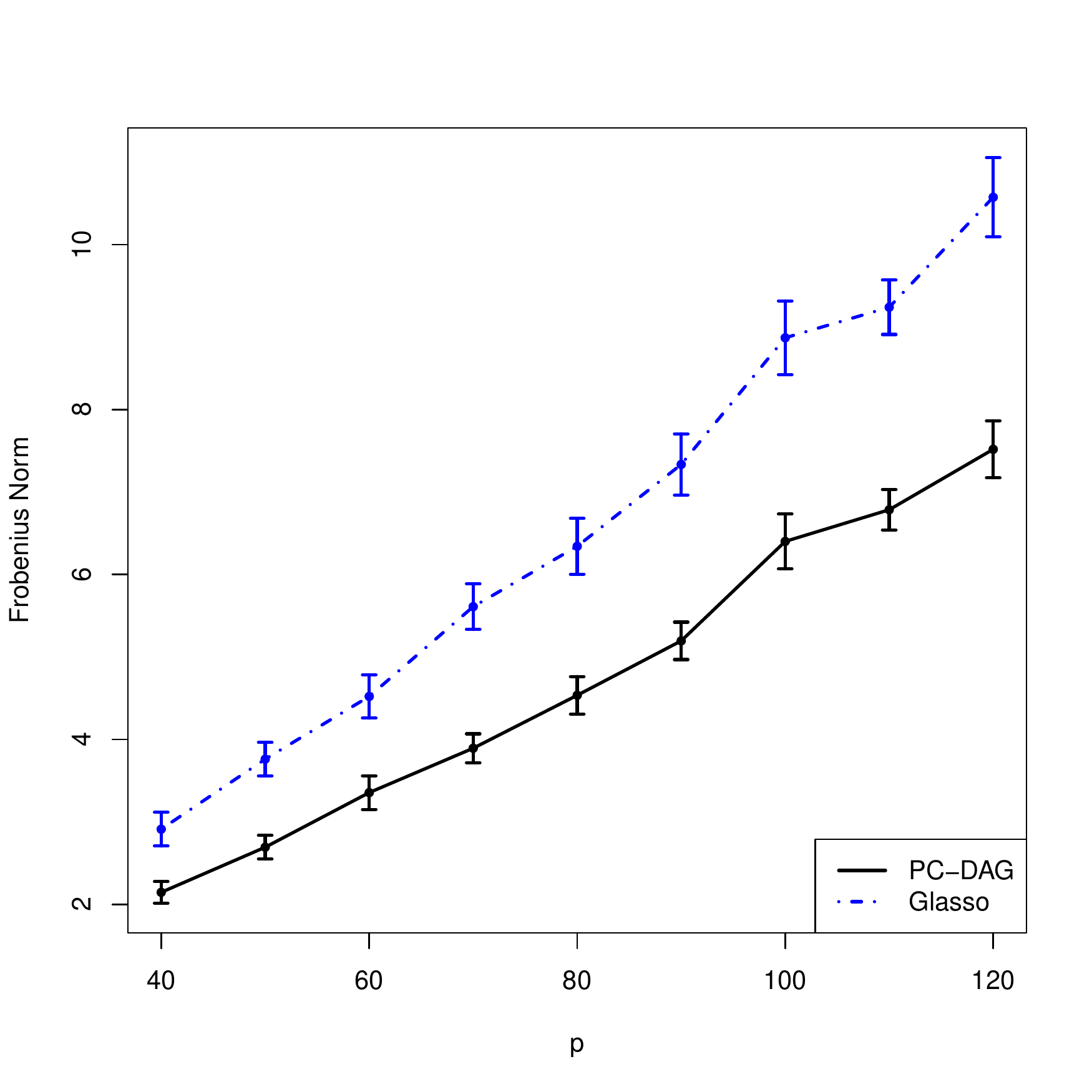}}} 
        \centerline{ 
          \subfigure[For setting D3]{\includegraphics[scale=0.3]{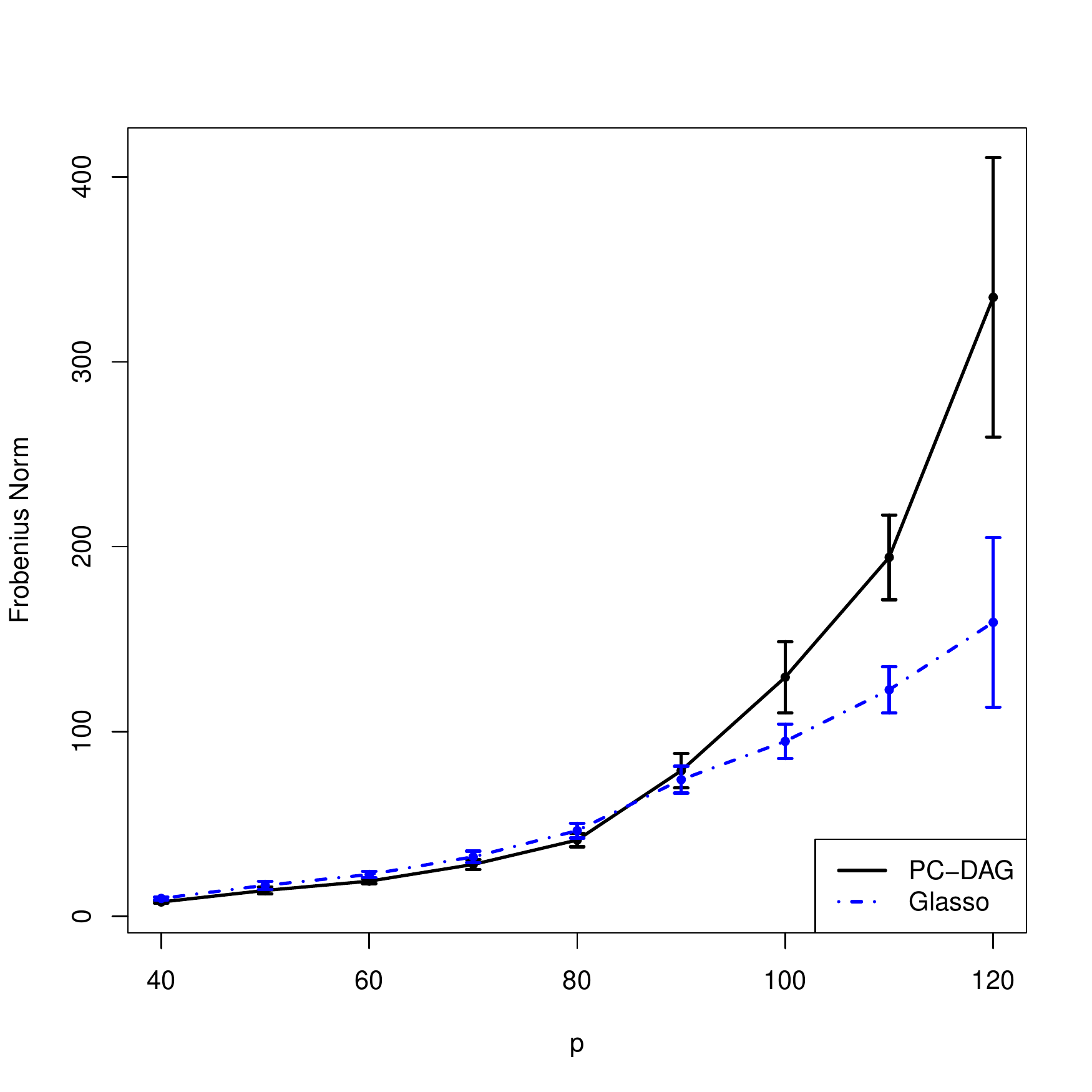}}
          \subfigure[For setting D4]{\includegraphics[scale=0.3]{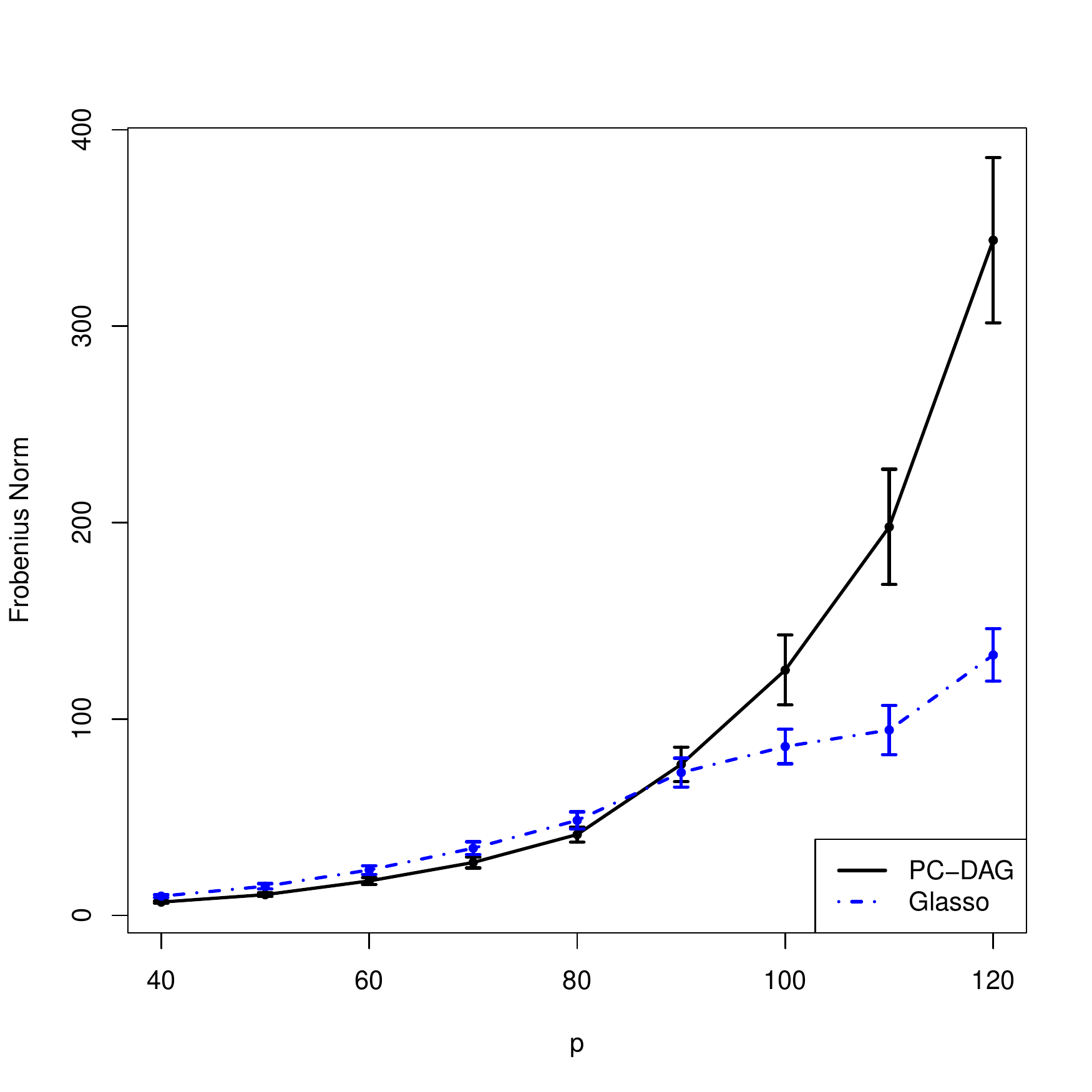}}}
        \caption{Plots of $\|\hat{\Sigma}-\Sigma\|_F$
           for DAG models. Vertical bars indicate (pointwise) 95\%
           confidence intervals. 
        }
        \label{D1D2D3D4}
\end{figure}

%%%%%%%%%%%%%%%%%%%%%%%%%%%%%%%%%%%%%%%%%%%%%%%%%%%%%%%%%%%%%%%%%%%%%%%%%%%%%%%%%
\begin{figure}[h]
        \centerline{ 
          \subfigure[For setting D1]{\includegraphics[scale=0.3]{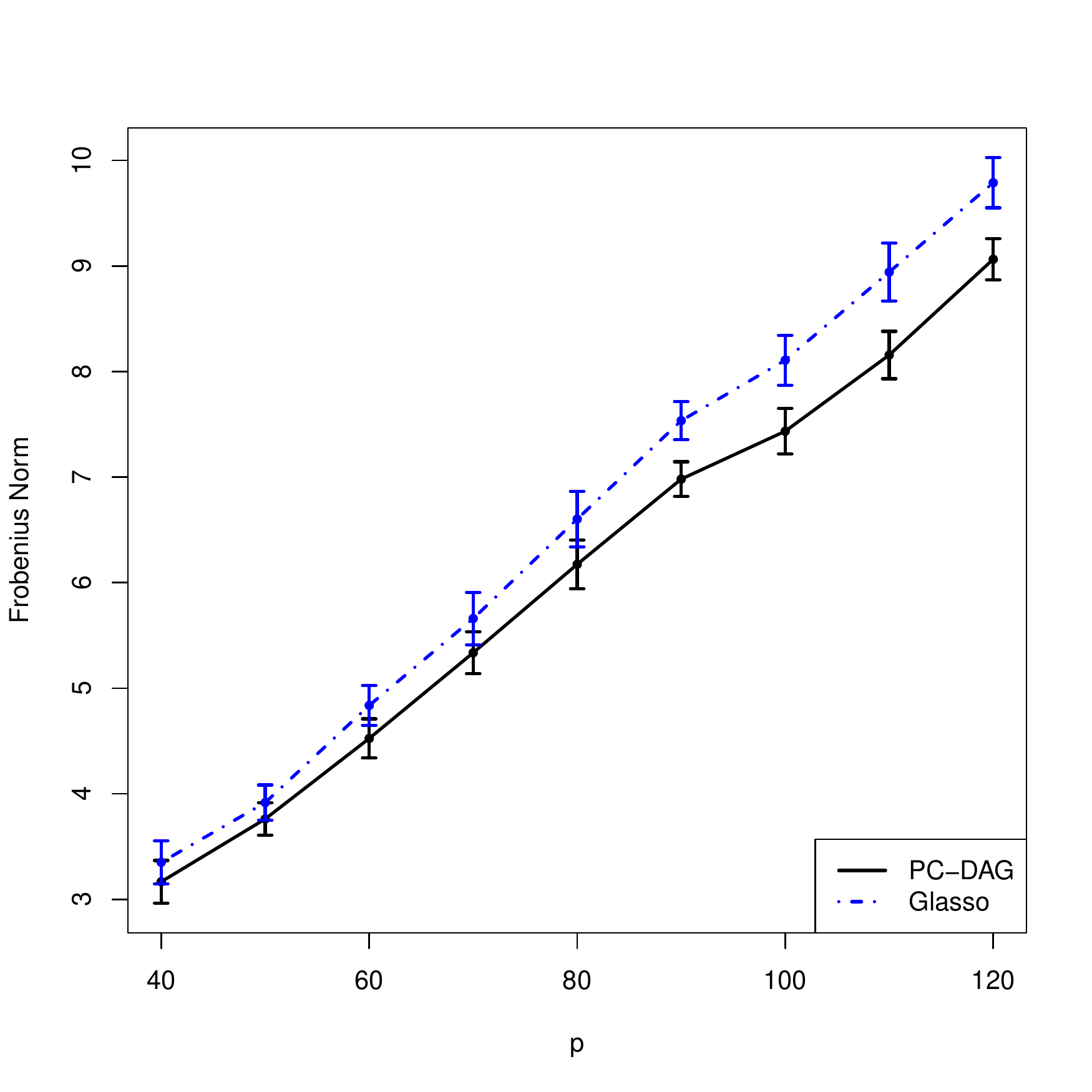}}
          \subfigure[For setting D2]{\includegraphics[scale=0.3]{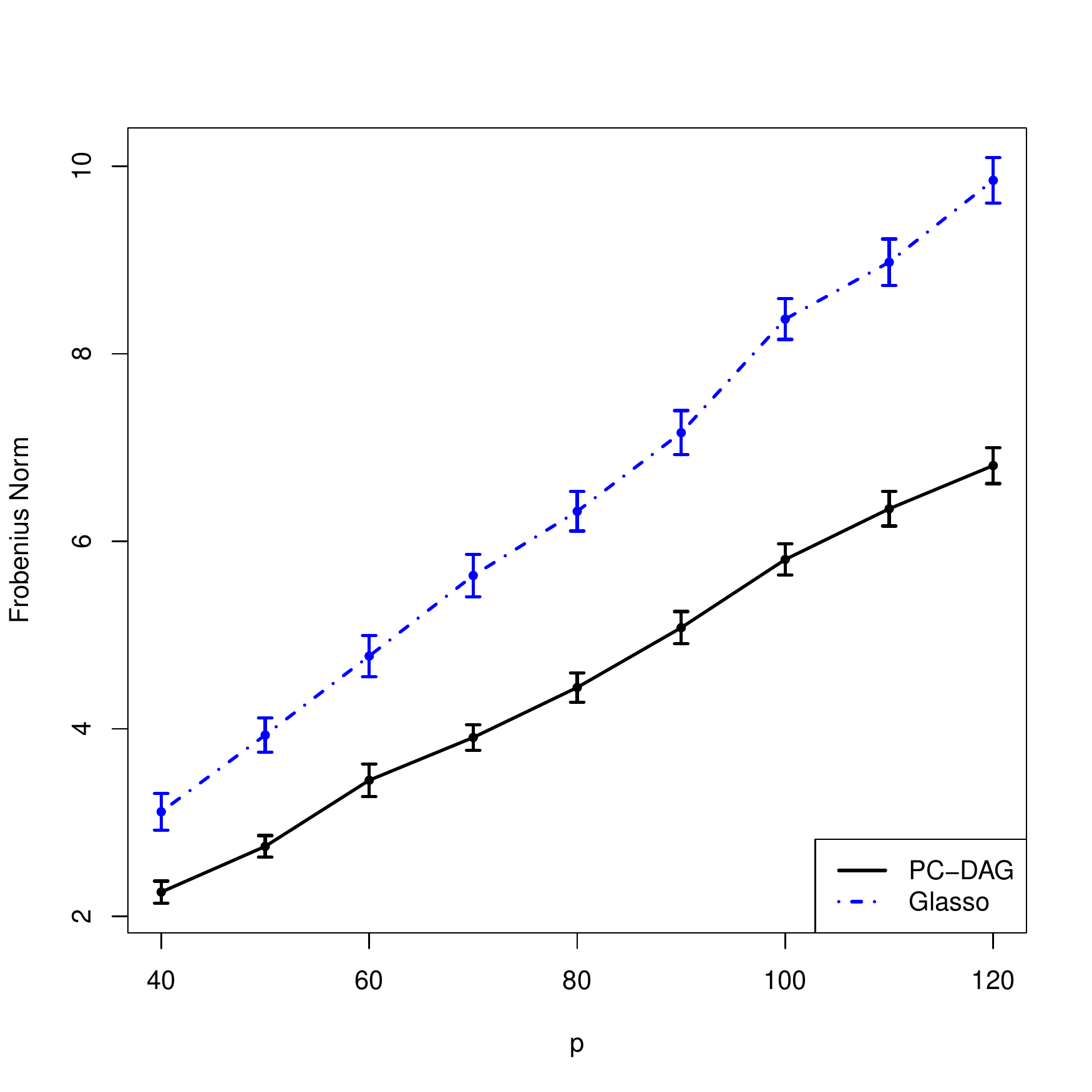}}}
        \centerline{ 
          \subfigure[For setting D3]{\includegraphics[scale=0.3]{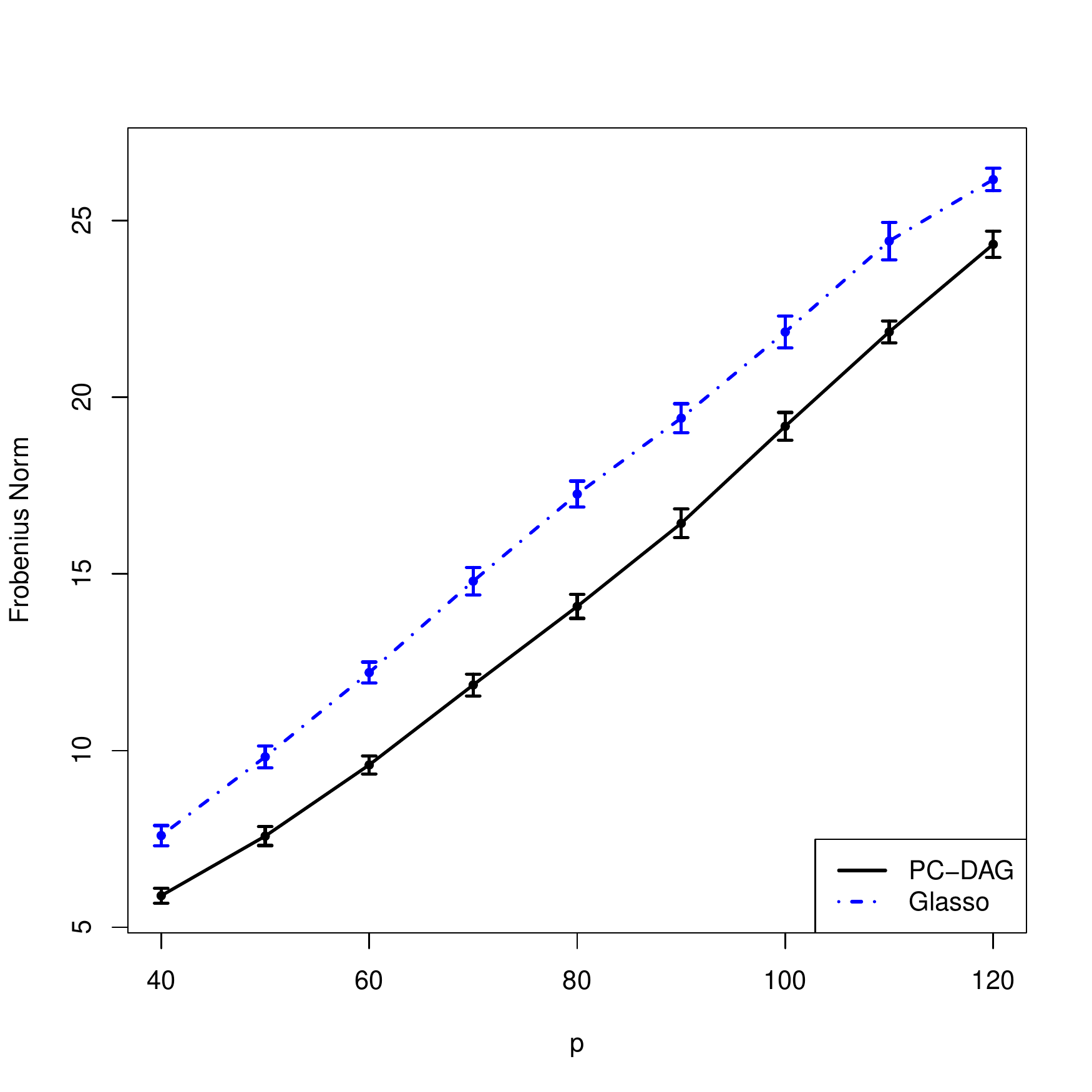}}
          \subfigure[For setting D4]{\includegraphics[scale=0.3]{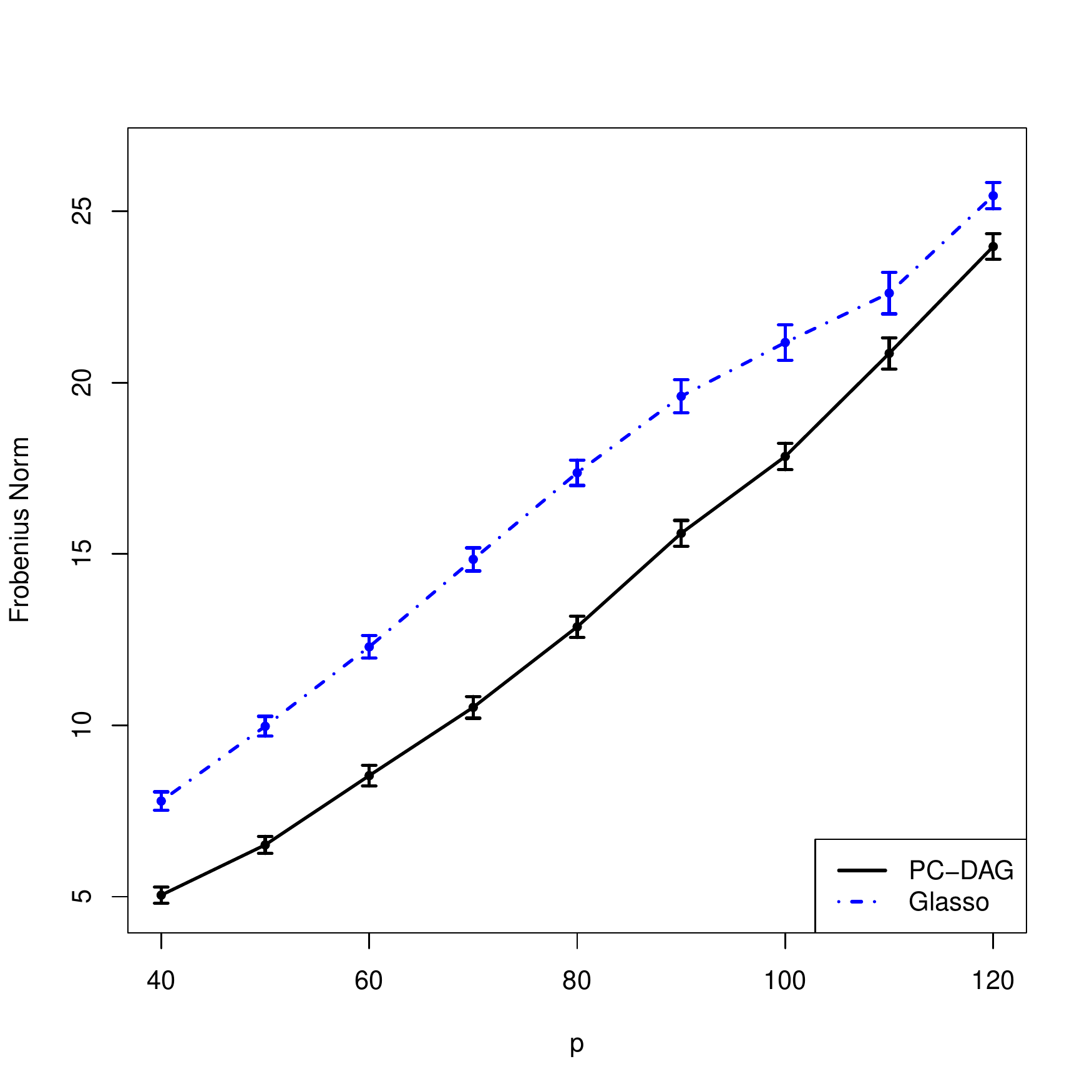}}}
        \caption{Plots of 
          $\|\hat{\Sigma}^{-1}-\Sigma^{-1}\|_F$ for DAG models. Vertical
          bars indicate (pointwise) 95\% confidence intervals.
        }
        \label{ID1D2D3D4}
\end{figure} 

Figures \ref{D1D2D3D4} and \ref{ID1D2D3D4} show that in the sparse settings
D1 and D2, the PC-DAG estimator
clearly outperforms Glasso. Concerning the more dense settings D3 and D4, the
PC-DAG method degrades only for the covariance matrix, whereas for the
inverse covariance matrix $\Sigma^{-1}$, the figures still show an improvement of
the PC-DAG estimator compared to the Glasso. If we match Figure
\ref{D1D2D3D4} (a) with Figure \ref{D1D2D3D4} (b) and Figure \ref{ID1D2D3D4} (a)
with Figure \ref{ID1D2D3D4} (b), we see that for a small increase of the
sample size the Glasso improves substantially less compared to the PC-DAG
estimator. The results in terms of the Kullback-Leibler loss are summarized
in Table \ref{tab1}.       

\subsubsection{Non DAG models}

Next we generate data from a non-DAG model proposed by
\cite{RBLZ08}. The
concentration matrix equals 
\[
\Sigma^{-1}=B+\delta I,
\]
where each off-diagonal entry in $B$ is generated independently and
equals 0.5 with probability $\pi$ or 0 with probability $1-\pi$, all
diagonal entries of $B$ are zero, and $\delta$ is chosen such that the condition
number of $\Sigma^{-1}$ is $p$. The concentration matrices, which we
generate from this model vary in their level of sparsity: for
$\Sigma^{-1}_{(1)}$ we take $\pi=0.1$ and for $\Sigma^{-1}_{(2)}$ we
choose $\pi=0.5$, i.e. $\Sigma^{-1}_{(1)}$ is sparser than
$\Sigma^{-1}_{(2)}$. Note that the expected numbers of non-zero entries in
$\Sigma^{-1}_{(1)}$ and $\Sigma^{-1}_{(2)}$ are proportional to $p^2$.

We generate Gaussian data $X^{(1)},\ldots,X^{(n)}$ i.i.d. $\sim$
$\mathcal{N}_p(0,\Sigma)$ with $\Sigma^{-1}$ constructed as above,
according to the following settings:  
\begin{itemize}
\item[nD1:] $n=30$, $\pi=0.1$, $p=40,50,60,70,80,90,100,110,120$ 
\item[nD2:] $n=50$, $\pi=0.1$, $p=40,50,60,70,80,90,100,110,120$
\item[nD3:] $n=30$, $\pi=0.5$, $p=40,50,60,70,80,90,100,110,120$ 
\item[nD4:] $n=50$, $\pi=0.5$, $p=40,50,60,70,80,90,100,110,120$ 
\end{itemize}

We tune and compare the estimation methods as described in Section
\ref{dagmodel}. 

\begin{figure}[h]
        \centerline{ 
          \subfigure[For setting nD1]{\includegraphics[scale=0.3]{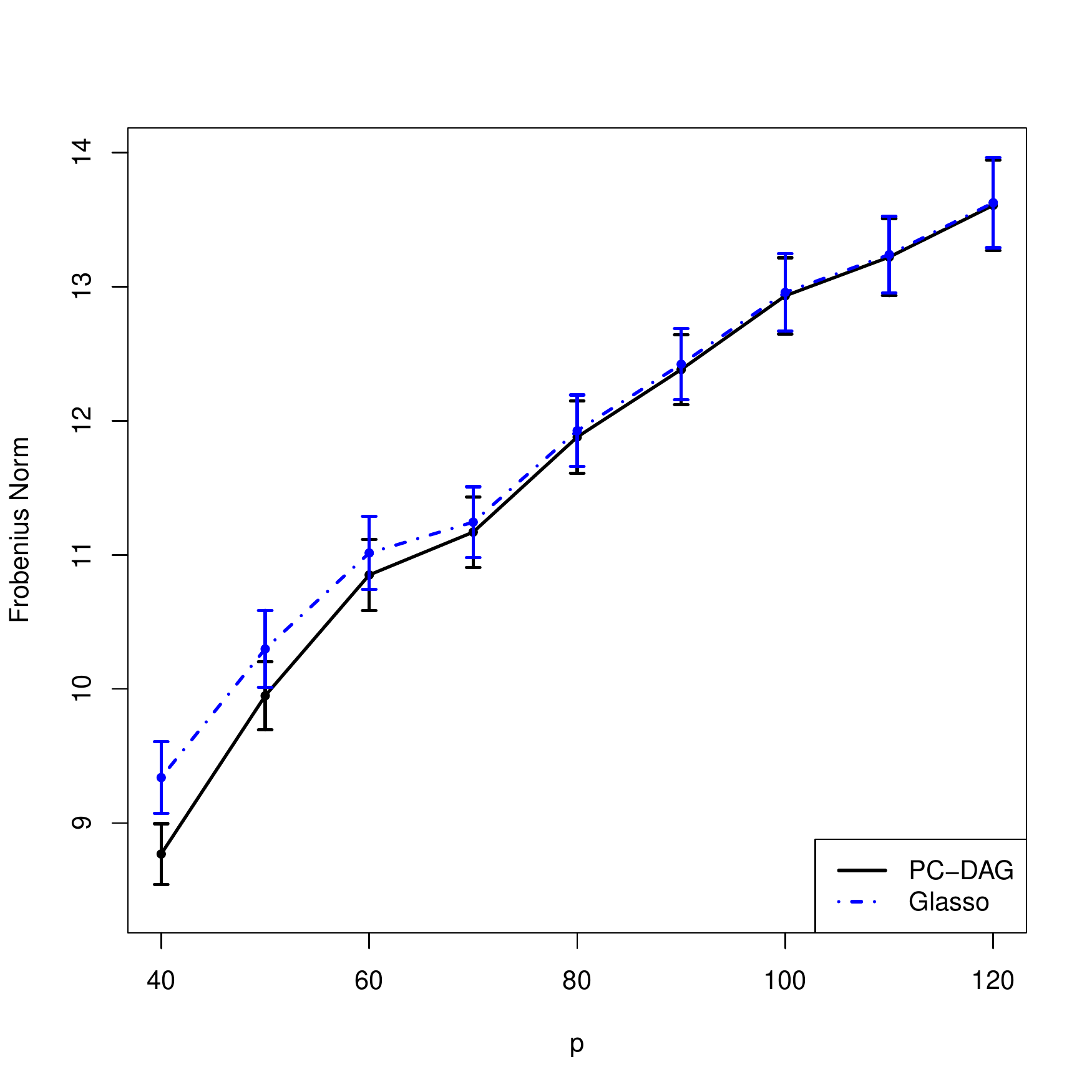}}
          \subfigure[For setting nD2]{\includegraphics[scale=0.3]{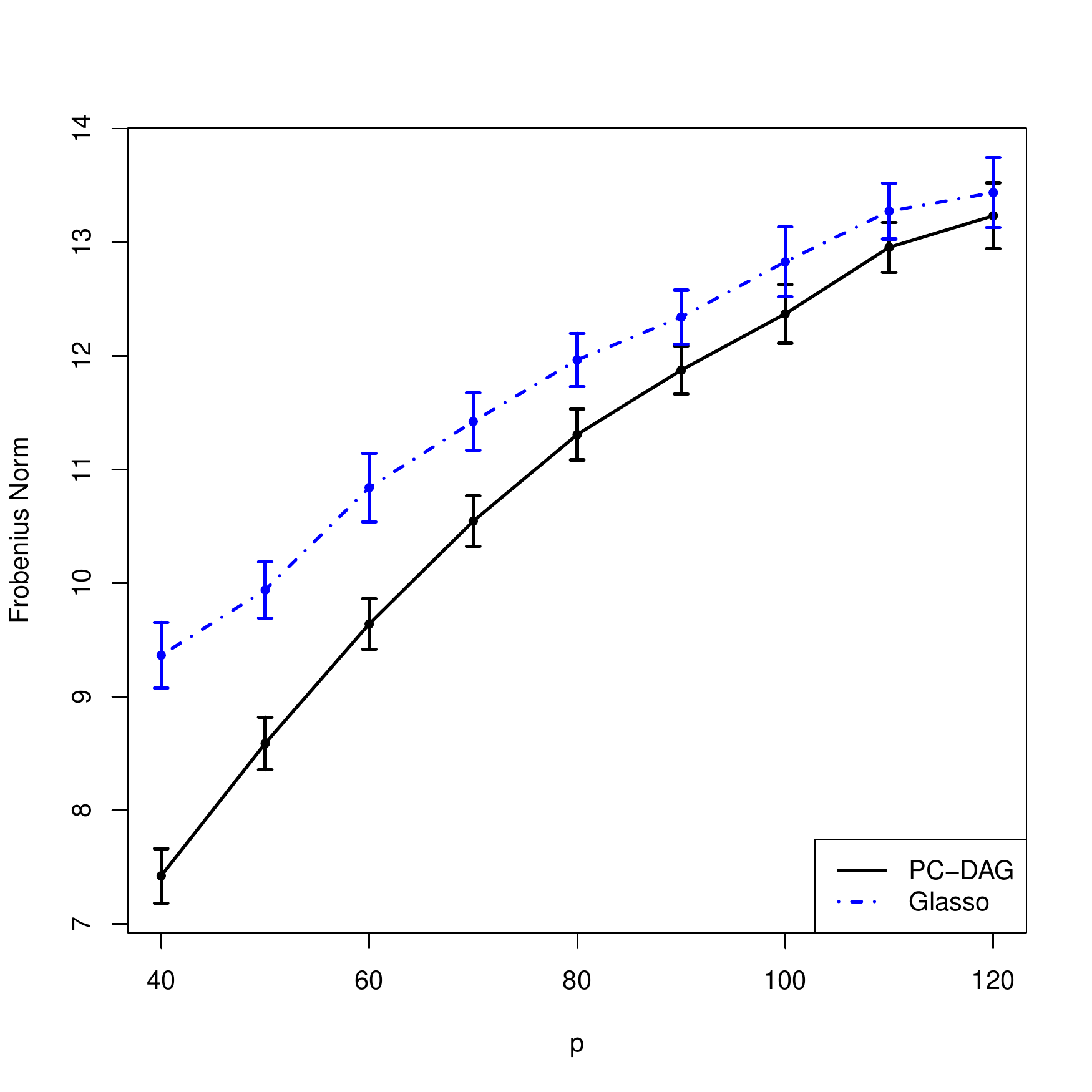}}}
        \centerline{ 
          \subfigure[For setting nD3]{\includegraphics[scale=0.3]{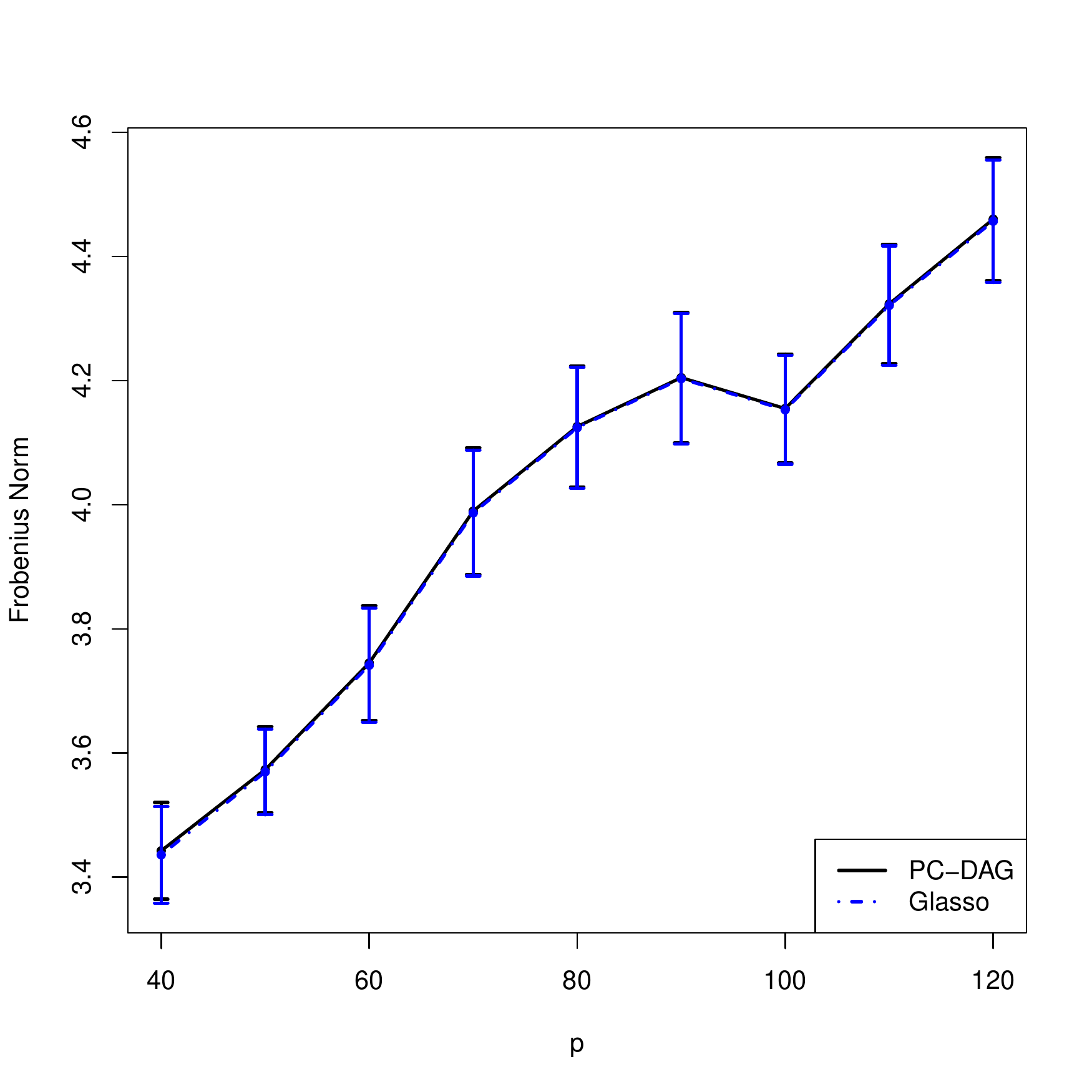}}
          \subfigure[For setting nD4]{\includegraphics[scale=0.3]{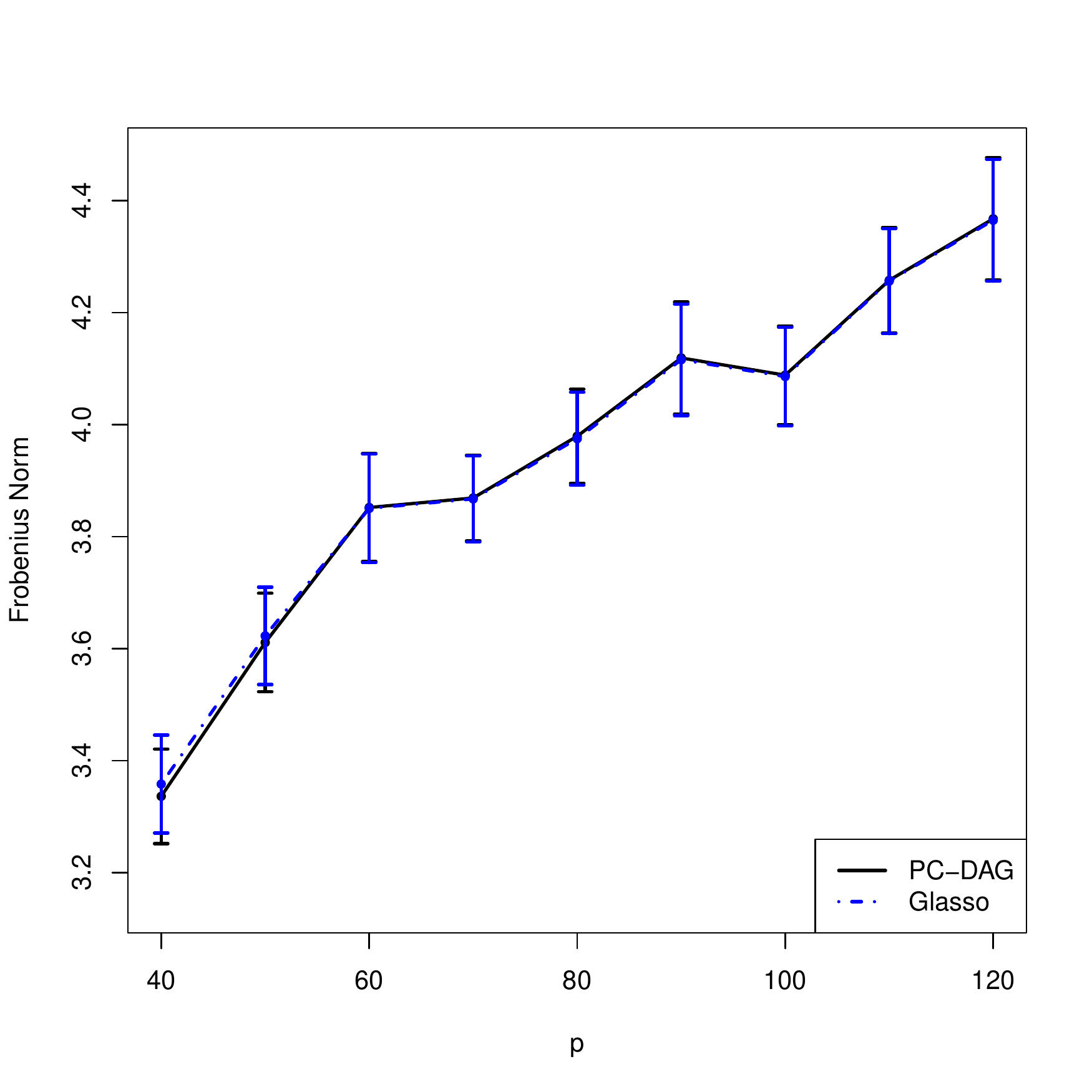}}}
        \caption{Plots of 
          $\|\hat{\Sigma}-\Sigma\|_F$ for non DAG models. Vertical
          bars indicate (pointwise) 95\% confidence intervals. 
        }
        \label{nD1nD2}
\end{figure} 

\begin{figure}[h]
        \centerline{ 
          \subfigure[For setting nD1]{\includegraphics[scale=0.3]{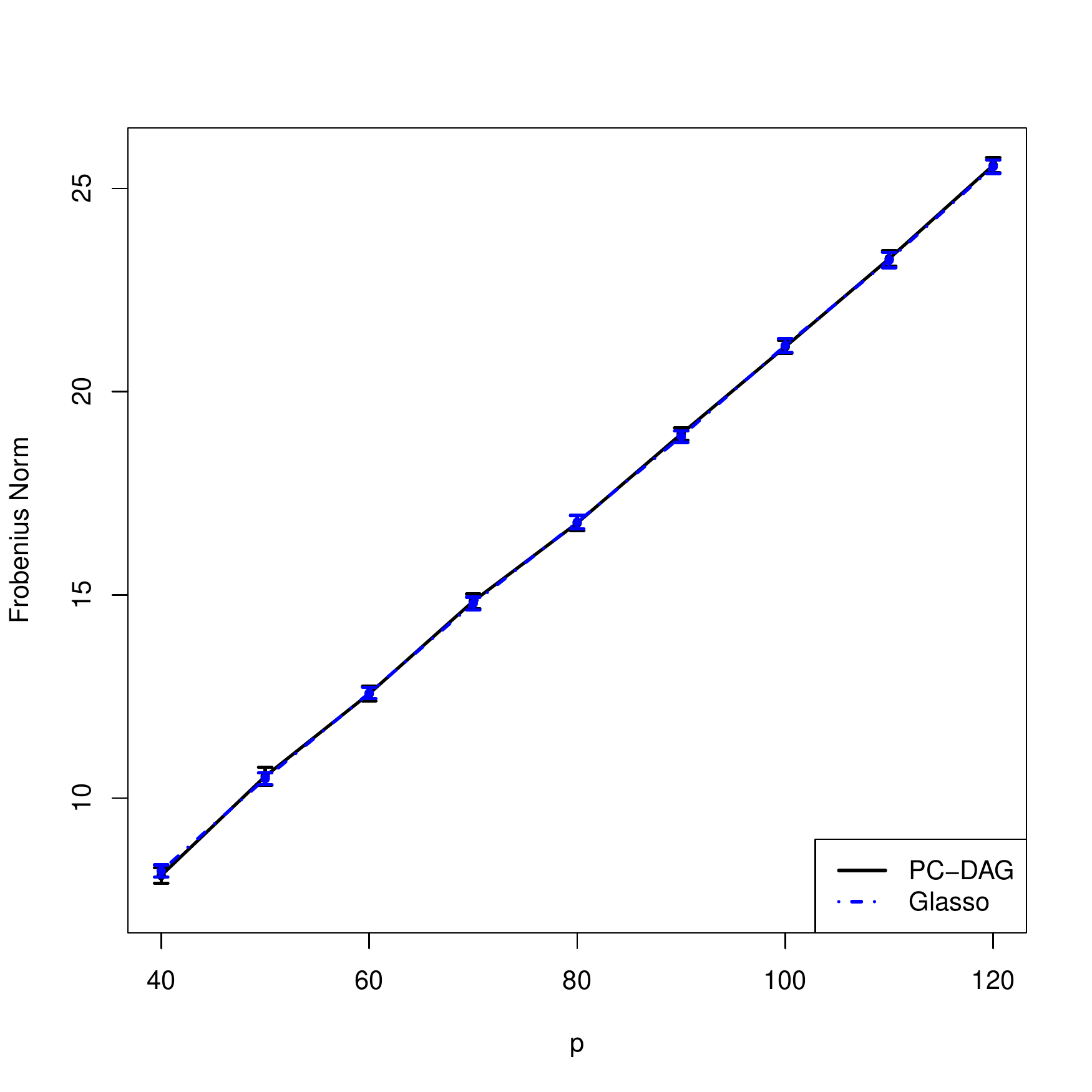}}
          \subfigure[For setting nD2]{\includegraphics[scale=0.3]{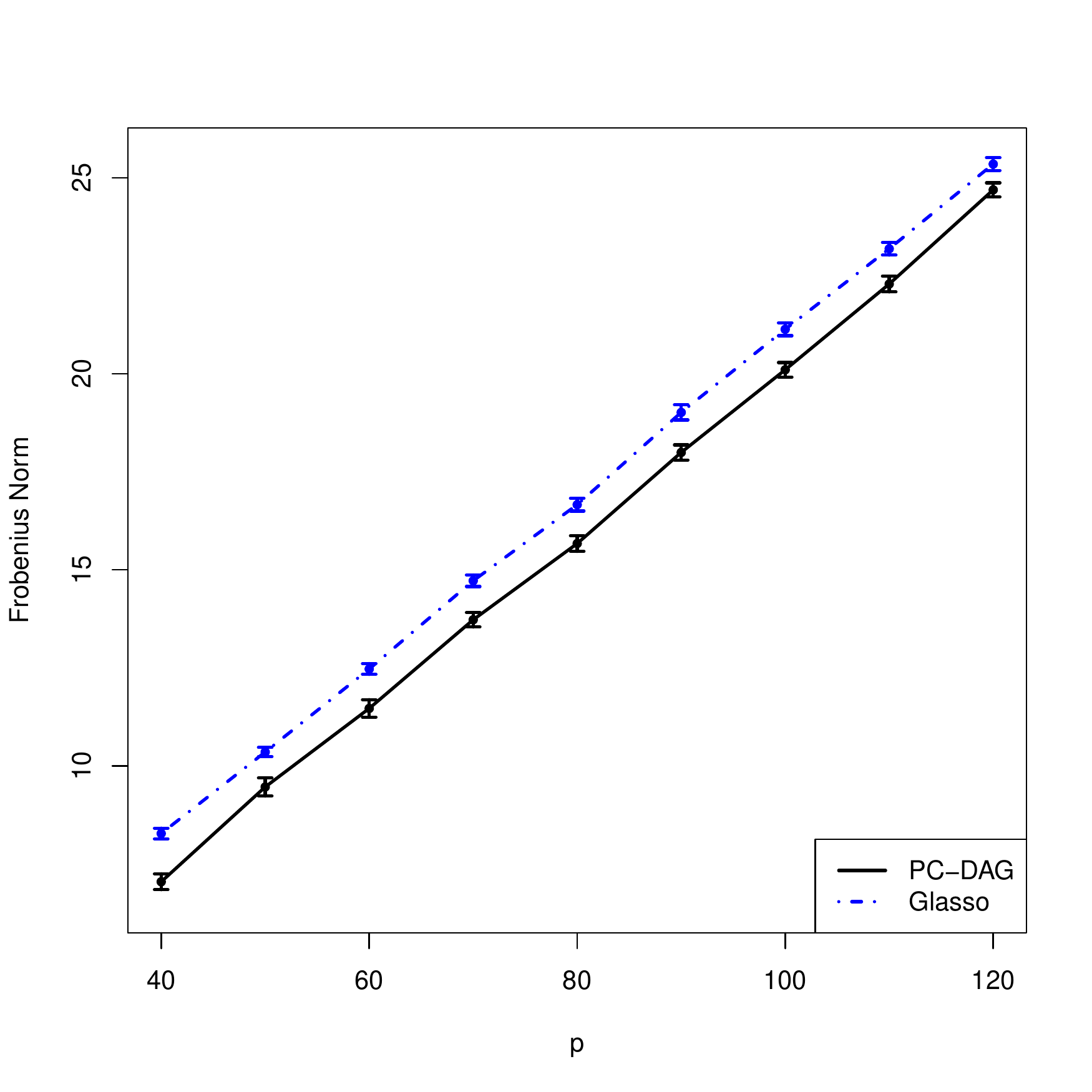}}}
        \centerline{ 
          \subfigure[For setting nD3]{\includegraphics[scale=0.3]{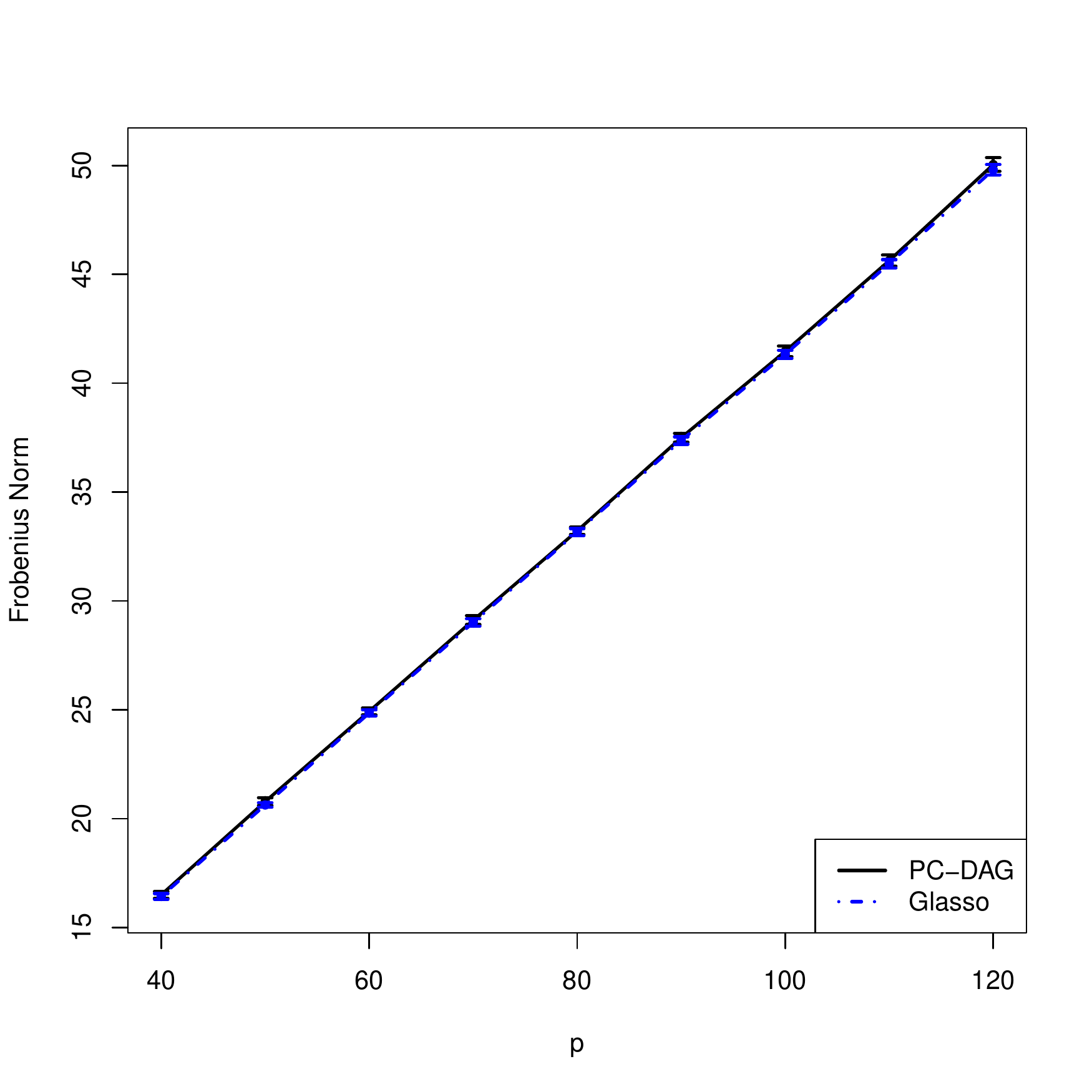}}
          \subfigure[For setting nD4]{\includegraphics[scale=0.3]{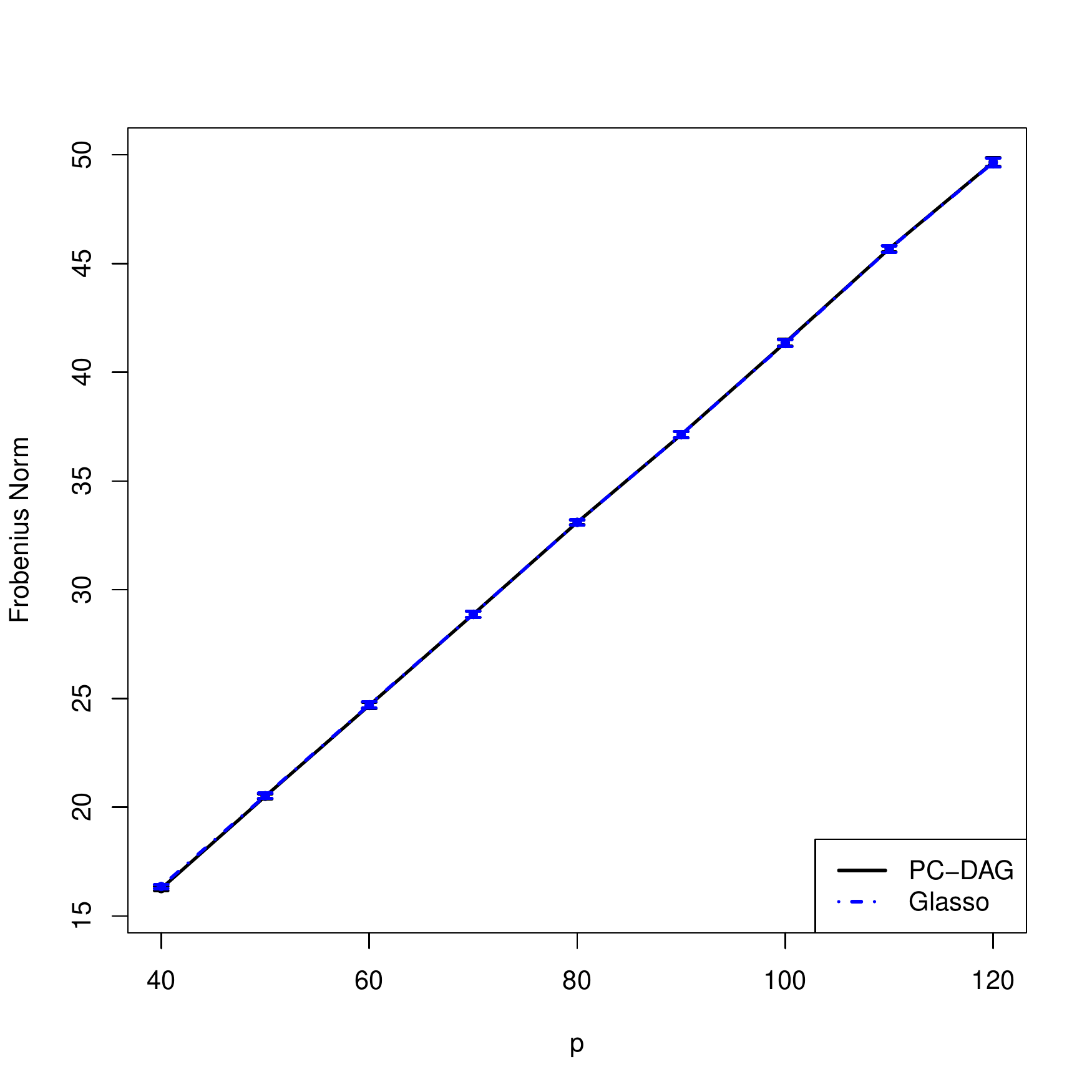}}}
        \caption{Plots of 
          $\|\hat{\Sigma}^{-1}-\Sigma^{-1}\|_F$ for  non DAG models. Vertical
          bars indicate (pointwise) 95\% confidence intervals. 
        }
        \label{InD1nD2}
\end{figure} 

In Figures \ref{nD1nD2} and \ref{InD1nD2} we see that in case of
the dense model with $\pi = 0.5$, the two methods
do not differ much (some of the differences are so small that they are
invisible on the scales shown in the plots). But for the sparse model with
$\pi = 0.1$ we observe 
that our PC-DAG estimator is better than the Glasso, in particular for the
setting nD2. The results in terms of the Kullback-Leibler loss are summarized in
Table \ref{tab1}.   

 \begin{table}
 \centering
 \begin{tabular}[h]{|c|c|c|c|c|}\hline
 \multicolumn{5}{|c|}{Kullback-Leibler Loss}\\\hline
 \multicolumn{5}{|c|}{DAG}\\\hline
 \multicolumn{5}{|c|}{$n=30$}\\\hline
 $p$ &\multicolumn{2}{|c|}{$s=0.01$ (D1)} & \multicolumn{2}{|c|}{$s=0.05$ (D3)} \\\hline 
      & Glasso & PC-DAG& Glasso& PC-DAG \\\hline
  40 & 3.78(0.17)  & 3.38(0.16) & 13.64(0.41) & 9.27(0.29)  \\\hline 
  80 & 12.75(0.34) & 11.36(0.29) & 54.63(0.9) & 41.69(0.67) \\\hline 
  120 & 25.5(0.41)  & 22.93(0.42) & 79.34(1.35) & 104.43(1.47) \\\hline 
 \multicolumn{5}{|c|}{DAG}\\\hline
 \multicolumn{5}{|c|}{$n=50$}\\\hline
 $p$ &\multicolumn{2}{|c|}{$s=0.01$ (D2)} & \multicolumn{2}{|c|}{$s=0.05$ (D4)} \\\hline 
      & Glasso & PC-DAG& Glasso& PC-DAG \\\hline
  40 & 3.12(0.15)  & 1.88(0.08) & 13.3(0.31) & 6.26(0.18)   \\\hline 
  80 & 11.07(0.26) & 6.32(0.17) & 53.08(1.22) & 31.83(0.53)  \\\hline 
  120 & 24.35(0.47)  & 13.76(0.27) & 66.21(2.78) & 87.11(0.93)  \\\hline 
\multicolumn{5}{|c|}{non DAG}\\\hline
 \multicolumn{5}{|c|}{$n=30$}\\\hline
 $p$ &\multicolumn{2}{|c|}{Model $\Sigma^{-1}_{(1)}$ (nD1)} &
 \multicolumn{2}{|c|}{Model $\Sigma^{-1}_{(2)}$ (nD3)} \\\hline 
      & Glasso & PC-DAG& Glasso& PC-DAG \\\hline
  40 & 15.61(0.21)  & 14.91(0.22) &13.53(0.16) &13.71(0.16)   \\\hline 
  80 & 35.63(0.45) & 35.9(0.49) &29.36(0.33) &29.49(0.33)  \\\hline 
  120 & 56.44(0.67)  & 56.88(0.7) &45.34(0.45) &45.76(0.46)  \\\hline 
\multicolumn{5}{|c|}{non DAG}\\\hline
 \multicolumn{5}{|c|}{$n=50$}\\\hline
 $p$ &\multicolumn{2}{|c|}{Model $\Sigma^{-1}_{(1)}$ (nD2)} &
 \multicolumn{2}{|c|}{Model $\Sigma^{-1}_{(2)}$ (nD4)} \\\hline 
      & Glasso & PC-DAG& Glasso& PC-DAG \\\hline
  40 & 15.38(0.24)  & 10.58(0.18) &12.76(0.16) &12.91(0.17)   \\\hline 
  80 & 34.28(0.4) & 32.13(0.3) &27.49(0.28) &27.68(0.34)  \\\hline 
  120 & 53.69(0.7)  & 53.16(0.67) &42.85(0.5) &43.08(0.5)  \\\hline 
 \end{tabular}
 \caption{Kullback-Leibler Loss (standard error in parentheses).}
 \label{tab1}
 \end{table}      

\subsection{Real data}

In this section we compare the two estimation methods for real data. 

\subsubsection{Isoprenoid gene pathways in \emph{Arabidopsis thaliana}}

We analyze the gene expression data from the isoprenoid
biosynthesis pathway in \emph{Arabidopsis thaliana} given in \cite{AWPB04}.
Isoprenoids comprehend the most diverse class of natural products and have
been identified in many different organisms. In plants isoprenoids play
important roles in a variety of processes such as photosynthesis,
respiration, regulation of growth and development.\\ 
This data set consists of $p=39$ isoprenoid genes for which we have $n=118$
gene expression patterns under various experimental conditions. As
performance measure we use the 10-fold cross-validated negative Gaussian
log-likelihood for centered data.\\

\begin{figure}[h]
        \centerline{         
            \includegraphics[scale=0.45]{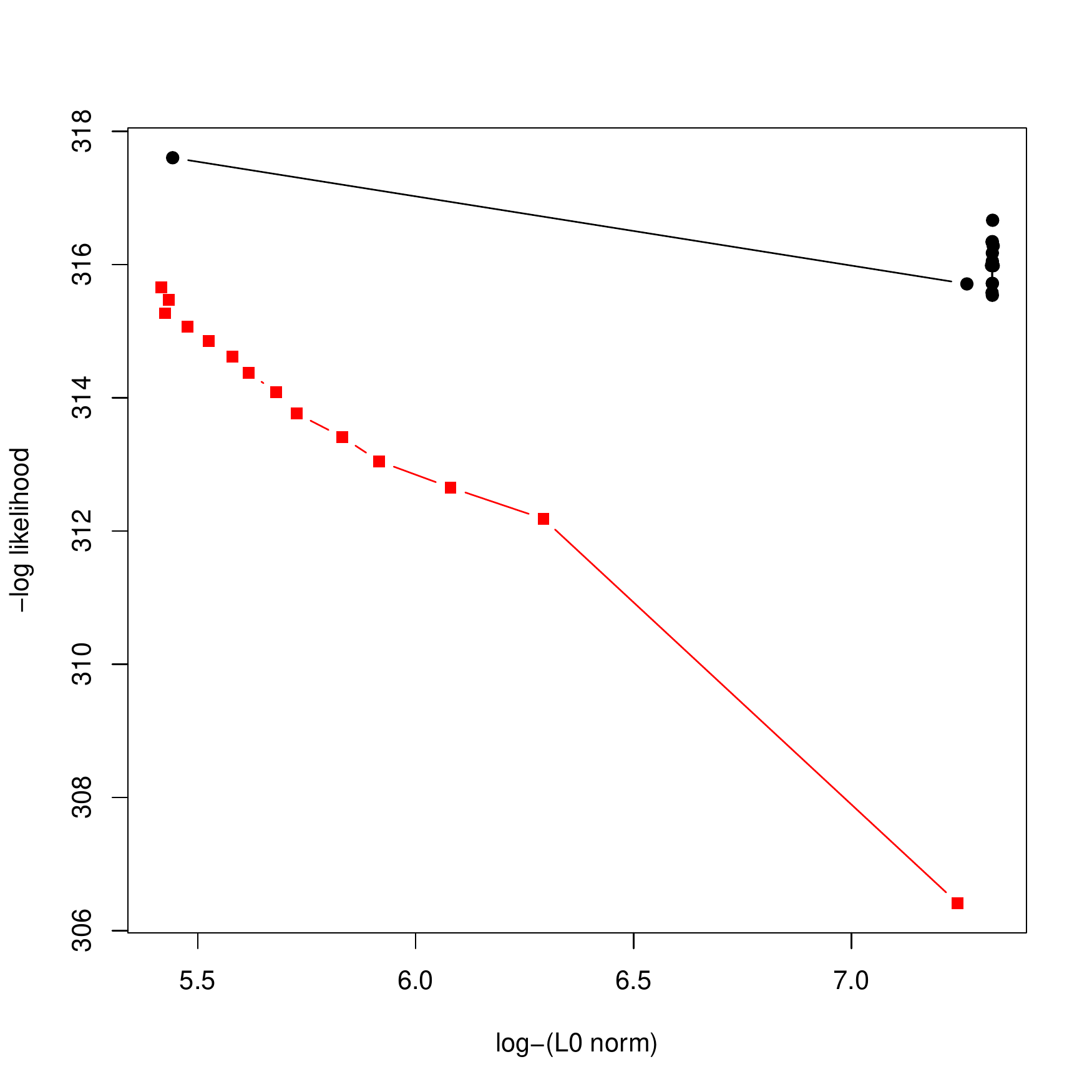}}
        \caption{10-fold CV of negative log-likelihood against the
          logarithm of the average number of non-zero entries of the
          estimated concentration matrix 
$\hat{\Sigma}^{-1}$. The squares stand for the Glasso and the circles for
the PC-DAG estimator.} 
        \label{genom}
\end{figure}

The results are described in Figure \ref{genom}. 
We find that none of the two methods performs substantially
better than the other and the slight superiority of Glasso is in the order
of 1\% only. The marginal difference in the negative log-likelihood
between the two estimation techniques may be due to the high noise in the data. 

\subsubsection{Breast Cancer data}

Next, we explore the performance   
on a gene expression data set from breast tumor samples. The tumor samples were
selected 
from the Duke Breast Cancer SPORE tissue bank on the basis of several criteria.
For more details on the data set see \cite{WEST01}.
The data matrix monitors $p=7129$ genes in $n=49$ breast tumor
samples. We only use the 100 variables
having the largest sample variance.\\ 
As before we first center the data and then compute the negative
log-likelihood via 10-fold cross-validation. Figure
\ref{bdat} shows the result.

\begin{figure}[h]
         \centerline{ 
            \includegraphics[scale=0.45]{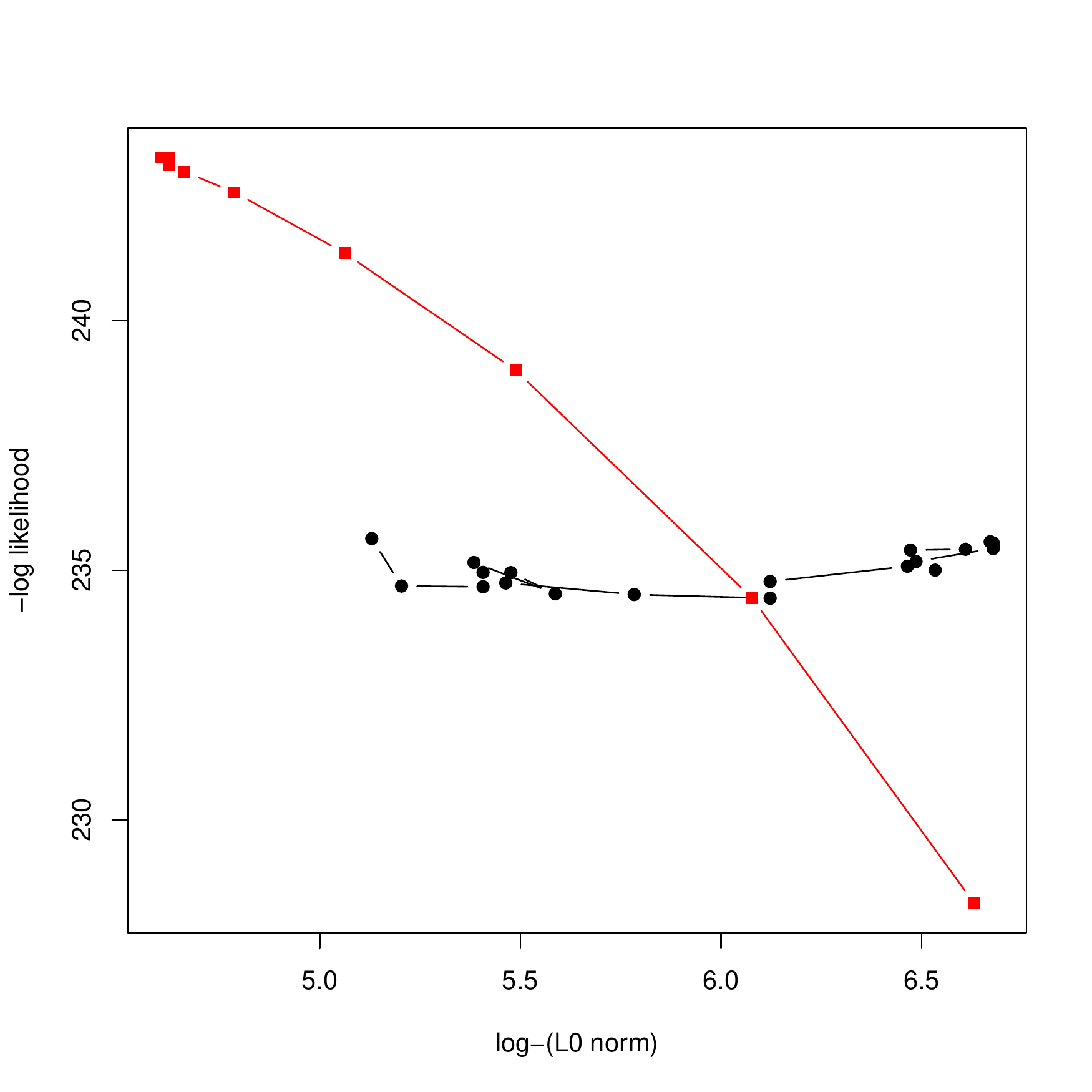}}
        \caption{
          10-fold CV of negative log-likelihood  against the
          logarithm of the average number of non-zero entries of the
          estimated concentration matrix. The squares stand for the Glasso
          and the circles for the PC-DAG estimator.} 
        \label{bdat}
\end{figure}
As for the Isoprenoid gene pathways data-set, we cannot nominate a winner
here. In fact, the performances are even more indistinct than before.

\section{A robust PC-DAG covariance estimator}
\label{robu}

In this section we propose a robust version of the PC-DAG estimator. 
According to Section \ref{meth}, we need an
initial covariance matrix estimation $\hat{\Sigma}_{init}$ in order to run
the PC-DAG technique. In Section \ref{meth}, we used the sample covariance
$\hat{\Sigma}_{init} = \hat{\Sigma}_{MLE}$ from (\ref{MLEcov}). It is well
known that the  standard sample covariance estimator is 
not robust against outliers or non-Gaussian distributions. 

In order to get a robust version of the PC-DAG
method we start with a robust  estimate of
$\Sigma$. We propose to use the orthogonalized Gnanadesikan-Kettenring (OGK)
estimator presented by \cite{OGK02}. Employing the OGK estimator in the
PC-algorithm, i.e. estimating partial correlations from the OGK covariance
estimate, we obtain a robustified estimate of the CPDAG, see also
\cite{ROBPC08}, and finally a robust PC-DAG covariance estimate as in
(\ref{estcov}) and  (\ref{DAGest-cov}) by using again the OGK covariance
estimator instead of $\hat{\Sigma}_{MLE}$. 

An ``ad-hoc'' robustification of 
the Glasso method can be achieved by using in (\ref{logliki}) the robust
OGK covariance estimate instead of the sample covariance
$\hat{\Sigma}_{MLE}$.   

\subsection{Simulation study for non-Gaussian data} 

In order to analyze the behavior of the robust PC-DAG method we use a
simulation model as in Section \ref{dagmodel} but with different
distributions for the errors $\epsilon$. Regarding the latter, we consider
the following distributions: $\mathcal{N}(0,1)$, 
$0.9\mathcal{N}(0,1)+0.1t_3(0,1)$ or
$0.9\mathcal{N}(0,1)+0.1\mbox{Cauchy}(0,1)$.

We compare the standard PC-DAG, robust PC-DAG, standard Glasso and the robust
Glasso estimators for Gaussian, 10$\%$ $t_3$ contaminated Gaussian and
10$\%$ Cauchy 
contaminated Gaussian data for one specific parameter setting:
\begin{itemize}

\item[R]: $n=50$, $p=80$, $s=0.01$
  
\end{itemize}
In order to compare the four methods we use the
Kullback-Leibler loss defined in Section \ref{dagmodel}.  
For the four estimation methods we plot the
Kullback-Leibler loss against the logarithm of the average number of
non-zero entries of the estimated concentration matrix $\hat{\Sigma}^{-1}$. 
The dotted vertical line represents the average
number of non-zero entries of the true underlying concentration
matrices. All the results are again based on 50 independent simulation runs.  

\begin{figure}[h]
  \centerline{ 
          \subfigure[Gaussian data]{\includegraphics[scale=0.35,width=6cm]{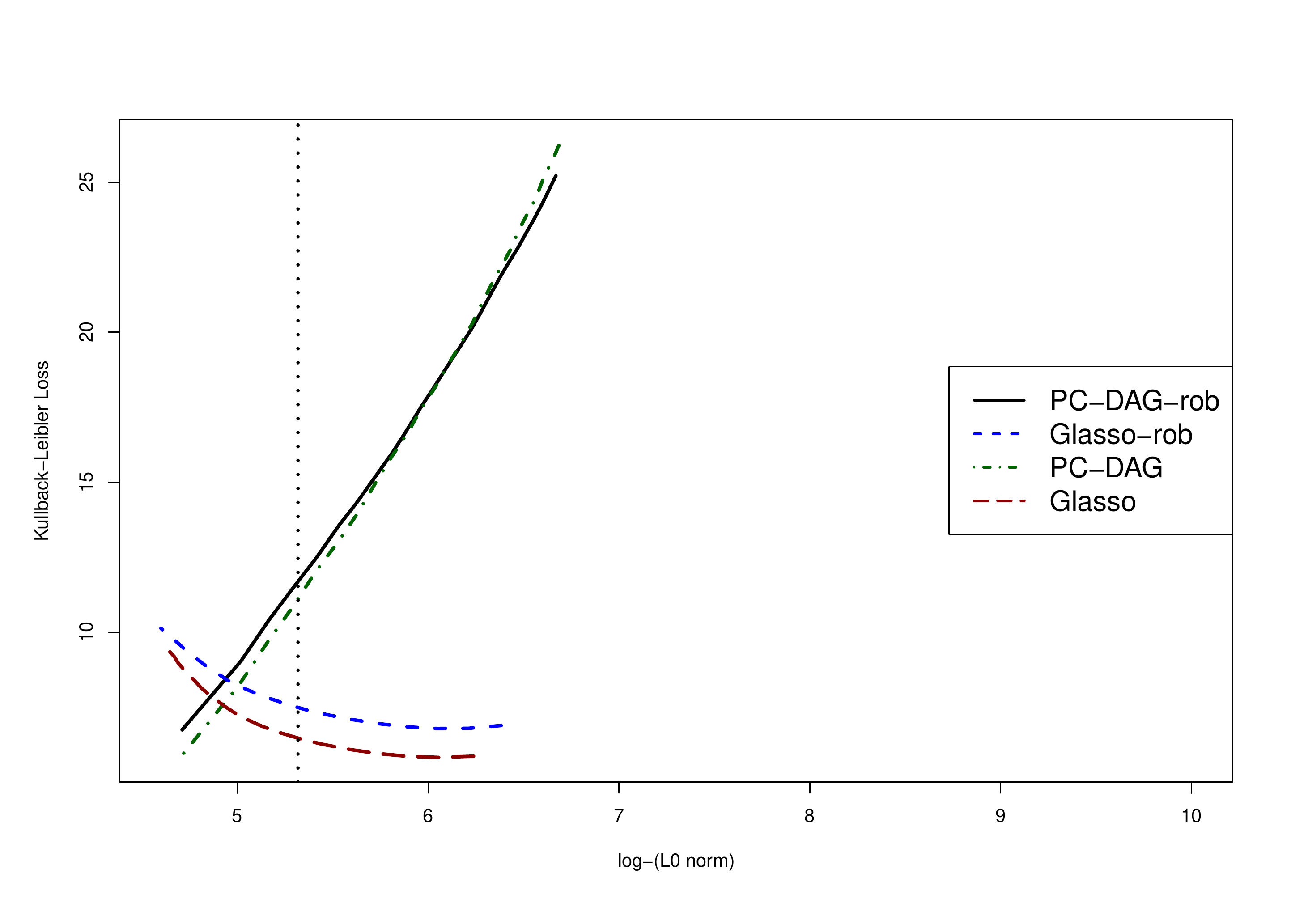}}
\subfigure[10$\%$ $t_3$ contaminated Gaussian data]{\includegraphics[scale=0.35,width=6cm]{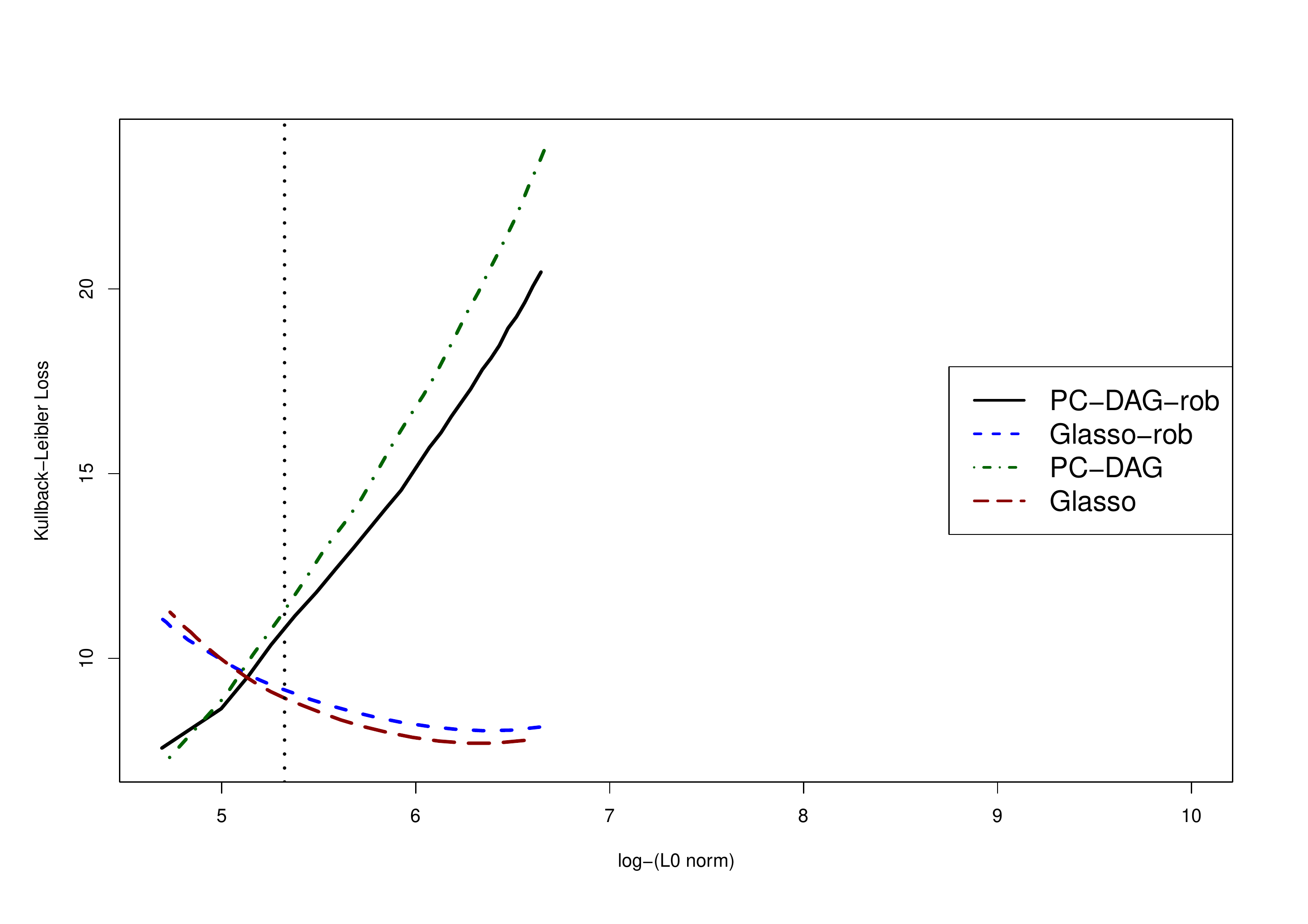}}} 
        \centerline{  
          \subfigure[10$\%$ Cauchy contaminated Gaussian data]{\includegraphics[scale=0.35,width=6cm]{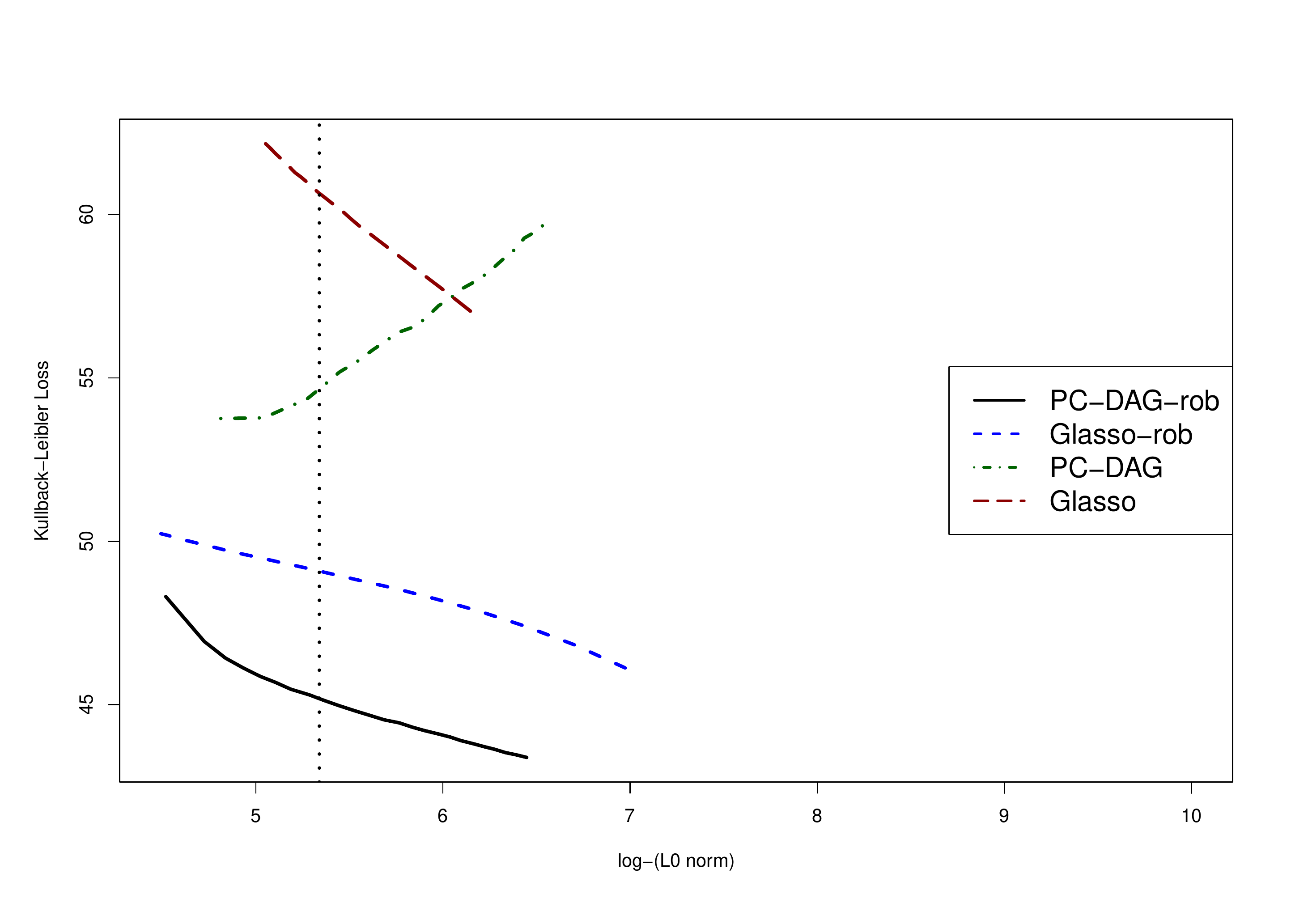}}}
        \caption{Kullback-Leibler loss
          against the logarithm of the average number of non-zero elements of
          $\Sigma^{-1}$ for Gaussian data 
          (a), 10$\%$ $t_3$ contaminated Gaussian data (b) and 10$\%$ Cauchy
          contaminated Gaussian data (c).} 
        \label{Rob}
      \end{figure}

Figures \ref{Rob} (a) and \ref{Rob} (b) show that without or with
moderate outliers, the standard and robust PC-DAG estimators perform about
as well as 
the standard and robust Glasso: the claim is based on the observation that
the minimum Kullback-Leibler loss of each of the four methods is about the
same, although the corresponding sparsity of the fitted concentration
matrix may be very different. In the presence of more severe outliers, the
robust PC-DAG technique is best as can be seen from Figure \ref{Rob} 
(c). In summary, the robust PC-DAG estimator is a
useful addition to gain robustness for estimating a high-dimensional
concentration matrix. 

\section{Summary and Discussion}

We have introduced the PC-DAG estimator, a graphical model based technique for
estimating sparse covariance 
matrices and their inverses from high-dimensional data. The method is based
on very different methodological concepts than shrinkage estimators. Our
PC-DAG procedure is  
invariant to variable permutation, yields a positive definite
estimate of the covariance and concentration matrix, and we have proven 
asymptotic consistency for sparse high-dimensional settings. An
implementation of the estimator is based on the \texttt{R}-package
\texttt{pcalg} \cite{pcalgMAN}. We remark that alternatively, one could
construct a high-dimensional covariance estimate based on a sparse
undirected conditional independence graph which itself can be inferred from
data using e.g. the node-wise Lasso procedure from \cite{MeinB06}.  

We have compared our PC-DAG estimator with the Glasso \cite{FHT07,BGA08} in two
simulation models. For the concentration matrix, our
PC-DAG approach clearly outperforms the Glasso technique for some parameter
settings, with performance gains up to 30-50\%, while it keeps up with
Glasso for the rest of the considered scenarios. For estimation of
covariances, the conclusions are similar but slightly less pronounced than
for inferring concentration matrices. Furthermore, we have compared
the two methods in two real data-sets and found only marginal differences in
performance. If the data generating mechanism is well
approximated by a DAG-model, the PC-DAG estimator is undoubtedly better
than the shrinkage-based Glasso. However, it is very hard to know a-priori
how well a DAG-model describes the underlying true distribution. 
Finally, we have presented a robustification of our PC-DAG estimator for
cases where the Gaussian data is contaminated by outliers. 

\section*{Appendix}
\begin{appendix}

\section{Proof of Lemma \ref{lemma:consistenz}}

A key element of the proofs is the analysis of low-order regression
problems described in Section \ref{estfromDAG}. 
%%%%%%%%%%%%%%%%%%%%%%%%%%%%%%%%%%%%%%%%%%%%%%%%%%%%%%%%%%%%%%%%%%%%%%%%%%%%%%%%%
For a DAG-structure with sets of parents, we consider regressions of the form
\begin{eqnarray*} 
X_i = \sum_{j \in pa(i)} \beta_j^{(i)} X_j + \eps_i,\ \ \eps_i \sim {\cal N}(0,\sigma^2_{i|pa(i)}),
\end{eqnarray*}
and $\eps_i$ independent of $X_{pa(i)}$. The corresponding OLS
estimates based on $n$ i.i.d. samples $X^{(1)},\ldots, X^{(n)}$ as in
(\ref{data}) are denoted by 
\begin{eqnarray*}
\hat{\beta}_j^{(i)},\ \ \hat{\sigma}^2_{i|pa(i)} = (n - |pa(i)|)^{-1} \sum_{r=1}^n
\left(X^{(r)}_{i} - \sum_{j \in pa(i)} \hat{\beta}_j^{(i)} X^{(r)}_{j}\right)^2.
\end{eqnarray*}

\begin{lemma}
\label{llb}
Suppose that the Gaussian assumption in (A), assumptions (B) and (F)
hold. Then, for every $\epsilon>0$, 
\begin{equation}
  \label{eq:blemma}
\begin{split}
  &\PR{\sup_{i=1,\ldots,p_n,j\in pa(i)}\left|
    \hat{\beta}^{(i)}_{j}-\beta^{(i)}_{j}\right|>\frac{\epsilon}{q_n}}\\ 
&\leq \frac{C_1}{\epsilon}q_n^2p_n\exp\left(-C_2\frac{\epsilon^2}{q_n^2}(n-q_n-1)\right)+2\exp\left(-C_3(\frac{n}{2}-q_n-1)\right),
\end{split}
\end{equation}
$n\geq 2(q_n + C_4)$ where $C_1,C_2>0$ are constants
depending on $\sigma^2$ and $r$ (see Assumptions (B) and (F)), and $C_3,
C_4>0$ are absolute constants. 
\end{lemma}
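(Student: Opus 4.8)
The plan is to treat the $p_n$ regressions one at a time and to exploit the exact Gaussian sampling law of OLS conditional on the design. Fix $i$ and let $\mathbf{X}_{pa(i)}$ be the $n\times|pa(i)|$ matrix of the samples of $X_j$, $j\in pa(i)$. Since $X_i=\mathbf{X}_{pa(i)}\beta^{(i)}+\eps_i$ with $\eps_i\sim\mathcal{N}(0,\sigma^2_{i|pa(i)}I_n)$ independent of $\mathbf{X}_{pa(i)}$ (by the i.i.d. structure), and since $\mathbf{X}_{pa(i)}^{T}\mathbf{X}_{pa(i)}$ is almost surely invertible as $n>q_n$, we have $\hat\beta^{(i)}\mid\mathbf{X}_{pa(i)}\sim\mathcal{N}\bigl(\beta^{(i)},\sigma^2_{i|pa(i)}(\mathbf{X}_{pa(i)}^{T}\mathbf{X}_{pa(i)})^{-1}\bigr)$. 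Hence for each $j\in pa(i)$ the deviation $\hat\beta^{(i)}_{j}-\beta^{(i)}_{j}$ is, given the design, centred Gaussian with variance $v_{i,j}:=\sigma^2_{i|pa(i)}\,[(\mathbf{X}_{pa(i)}^{T}\mathbf{X}_{pa(i)})^{-1}]_{jj}$. The problem thus reduces to (a) a deterministic bound on $\sigma^2_{i|pa(i)}$ and (b) a uniform high-probability lower bound on the Gram-matrix quantity in $v_{i,j}$.

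For (a), conditioning cannot increase variance, so assumption (B) gives $\sigma^2_{i|pa(i)}=\VAR{X_i\mid X_{pa(i)}}\le\sigma^2$ for all $i$. For (b), I would use the identity $[(\mathbf{X}_{pa(i)}^{T}\mathbf{X}_{pa(i)})^{-1}]_{jj}=1/\|R_{i,j}\|_2^2$, where $R_{i,j}$ is the residual of the $j$-th design column after projection onto the columns $pa(i)\setminus\{j\}$. Writing $\mathbf{X}_j=\mathbf{X}_{pa(i)\setminus j}\gamma+\delta$ with $\delta\sim\mathcal{N}(0,\VAR{X_j\mid X_{pa(i)\setminus j}}I_n)$ independent of $\mathbf{X}_{pa(i)\setminus j}$, one gets that $\|R_{i,j}\|_2^2$ is distributed as $\VAR{X_j\mid X_{pa(i)\setminus j}}\cdot\chi^2_{n-|pa(i)|+1}$, and assumption (F) bounds the conditional variance below by $r$. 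A lower-tail bound for the $\chi^2$ distribution, together with a union bound over the at most $p_nq_n$ pairs $(i,j)$ whose polynomial cost $\log(p_nq_n)$ is absorbed into the exponent (legitimate under (C), (D), and $n\ge 2(q_n+C_4)$), shows that outside an event of probability at most $2\exp(-C_3(n/2-q_n-1))$ one has $\|R_{i,j}\|_2^2\ge c\,r\,(n-q_n-1)$ simultaneously for all $i$ and all $j\in pa(i)$, hence $v_{i,j}\le C\sigma^2/(r(n-q_n-1))$ on this event. This produces the second term of (\ref{eq:blemma}).

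On this good design event I would then control the Gaussian fluctuations: $\PR{|\hat\beta^{(i)}_{j}-\beta^{(i)}_{j}|>\epsilon/q_n\mid\mathbf{X}_{pa(i)}}\le 2\Phi\bigl(-(\epsilon/q_n)/\sqrt{v_{i,j}}\bigr)\le 2\Phi\bigl(-c'\epsilon\sqrt{r(n-q_n-1)}/(q_n\sigma)\bigr)$, and the Mills-ratio bound $\Phi(-t)\le t^{-1}\exp(-t^2/2)$ turns this into a bound of the shape $\frac{C}{\epsilon}\frac{q_n}{\sqrt{n-q_n-1}}\exp\bigl(-C_2\epsilon^2(n-q_n-1)/q_n^2\bigr)$; the $1/\epsilon$ prefactor in (\ref{eq:blemma}) arises precisely here. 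A union bound over the $p_n$ nodes $i$ and their at most $q_n$ parents $j$, together with $1/\sqrt{n-q_n-1}\le 1$, yields the first term $\frac{C_1}{\epsilon}q_n^2p_n\exp(-C_2\epsilon^2(n-q_n-1)/q_n^2)$. Adding the design-event probability completes the argument.

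The step I expect to be the main obstacle is keeping the powers of $q_n$ exactly as stated. The naive route via $\|\hat\beta^{(i)}-\beta^{(i)}\|_2\le\|(\hat\Sigma_{pa(i),pa(i)})^{-1}\|_2(\cdots)$ would require a lower bound on $\lambda_{\min}(\Sigma_{pa(i),pa(i)})$, which assumption (F) only gives at rate $r/q_n$, plus an entrywise-to-operator-norm conversion on a $q_n\times q_n$ matrix, each injecting extra factors of $q_n$. Working conditionally on the design and using the residual representation of the diagonal of the inverse Gram matrix replaces the smallest-eigenvalue bound by the sharp one-dimensional quantity $\VAR{X_j\mid X_{pa(i)\setminus j}}\ge r$, which is what makes the claimed rate attainable; the other delicate point is the $\chi^2$ lower-tail step, where one must absorb the $\log(p_nq_n)$ union-bound cost into the exponent, and this is where the hypothesis $n\ge 2(q_n+C_4)$ enters.
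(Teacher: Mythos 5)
Your proof follows essentially the same route as the paper's: condition on the design, use the exact Gaussian law of the OLS coefficient whose conditional variance is $\sigma^2_{i|pa(i)}$ divided by the residual norm of column $j$ projected off $pa(i)\setminus\{j\}$ (the paper writes this same quantity as $(n-|s(i,j)|-1)\hat{\sigma}^2_{j|s(i,j)}$), split on a good design event controlled by a $\chi^2$ lower-tail (Bernstein) bound via assumption (F), apply a Mills-ratio bound on the good event, and union-bound over the $p_nq_n$ pairs. The argument is correct, and your explicit absorption of the $\log(p_nq_n)$ union-bound cost into the exponent of the design-event term is, if anything, slightly more careful than the paper's own final combination step.
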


\begin{proof}[Proof]
 The proof is analogous to Lemma 7.1 in \cite{MMP09}. For completeness, we
 give a detailed derivation. The union bound yields
\begin{equation}
\label{lang}
  \PR{\sup_{i=1,\ldots,p_n,j\in pa(i)}\left|
    \hat{\beta}^{(i)}_{j}-\beta^{(i)}_{j}\right|>\frac{\epsilon}{q_n}} 
\leq p_nq_n\sup_{i,j}\PR{\left|
    \hat{\beta}^{(i)}_{j}-\beta^{(i)}_{j}\right|>\frac{\epsilon}{q_n}}.
\end{equation}
Next we analyze $\sup_{i,j}\PR{\left|
    \hat{\beta}^{(i)}_{j}-\beta^{(i)}_{j}\right|>\tilde{\epsilon}}$ for a
general $\tilde{\epsilon} >0$. 

Let $i \in \{1,...,p_n\}$ and denote by $s(i,j) = pa(i)\backslash j$. We
consider first the conditional distribution of
$\hat{\beta}^{(i)}_{j}|X_{pa(i)}$. The variance of $X_{i}|X_{pa(i)}$ is
$\sigma^2_{i|pa(i)}$ and we denote the variance of $X_{j}|X_{s(i,j)}$ by
$\sigma^2_{j|s(i,j)}$. Further, we denote the sample 
variance of $X_{j}$ by $\hat{\sigma}^2_{j}$, the sample variance of
$X_{j}|X_{s(i,j)}$ by $\hat{\sigma}^2_{j|s(i,j)}$ and the
sample multivariate correlation coefficient between $X_{j}$ and $X_{s(i,j)}$ by
$R^2_{j|s(i,j)}$. Then,  when conditioning on ${\bf X}_{pa(i)} =
\{X_{r,j};\ r=1,\ldots ,n,\ j \in pa(i)\}$, 
\begin{equation}
\label{var}
\VAR{\hat{\beta}^{(i)}_{j} \mid {\bf X}_{pa(i)}}=\frac{1}{1-R^2_{j|s(i,j)}}\frac{\sigma^2_{i|pa(i)}}{(n-1)\hat{\sigma}^2_{j}}=\frac{\sigma^2_{i|pa(i)}}{(n-|s(i,j)|-1)\hat{\sigma}^2_{j|s(i,j)}},
\end{equation}
where the first equality follows from e.g. \cite[p.120]{JF97} and the second
equality follows from
$1-R^2_{j|s(i,j)}=\frac{(n-|s(i,j)|-1)\hat{\sigma}^2_{j|s(i,j)}}{(n-1)\hat{\sigma}^2_{j}}$. With
(\ref{var}), $\ERW{\hat{\beta}^{(i)}_{j} \mid {\bf X}_{pa(i)}}=\beta^{(i)}_{j}$
and the Gaussian assumption in (A), we get
\begin{equation}
\label{prob}
\begin{split}
&\PR{|\hat{\beta}^{(i)}_{j}-\beta^{(i)}_{j}|>\tilde{\epsilon} \mid {\bf
    X}_{pa(i)}}=\\
&\PR{|Z|>\frac{\tilde{\epsilon}\sqrt{n-|s(i,j)|-1}
    \hat{\sigma}_{j|s(i,j)}} {\sigma_{i|pa(i)}} \mid {\bf X}_{pa(i)}},
\end{split}
\end{equation}
where $Z$ is a standard normal random variable.

We first analyze (\ref{prob}) on
the set $B_{js(i,j)}=\{\hat{\sigma}^2_{j|s(i,j)}>\frac{1}{2}\sigma^2_{j|s(i,j)}\}$.
From assumption (F) and $\VAR{X_{i}\mid X_{pa(i)}} \le \sigma^2$ it follows
that
\begin{eqnarray}\label{add1}
\inf_{i=1,...,p_n,j\in pa(i)}\frac{\VAR{X_{j}\mid  X_{pa(i)\setminus
      j}}}{\VAR{X_{i}\mid X_{pa(i)}}}\geq \frac{r}{\sigma^2}= v^2, 
\end{eqnarray}
where $v>0$. Using this bound from (\ref{add1}) we obtain

\begin{eqnarray}\label{add2}
&&\PR{|Z|>\frac{\tilde{\epsilon} \sqrt{n-|s(i,j)|-1}
    \hat{\sigma}_{j|s(i,j)}}{\sigma_{i|pa(i)}} \mid {\bf
    X}_{pa(i)}}\mathbb{I}_{B_{js(i,j)}}\nonumber \\ 
&&\leq  \PR{|Z|>\tilde{\epsilon} v
  \frac{\sqrt{n-|s(i,j)|-1}}{\sqrt{2}}}\nonumber \\
&&\leq \PR{|Z|>C \tilde{\epsilon} \sqrt{n-|q_n|-1}},
\end{eqnarray}
where C depends on $v$ in (\ref{add1}). We then bound the tail probability
of the standard normal distribution by 
$\PR{|Z|>a}\leq \frac{2}{\sqrt{2\pi}a}\exp{(\frac{-a^2}{2})}$ for $a >
0$. Hence, (\ref{add2}) can be further bounded by 
\begin{eqnarray}\label{add3}
\frac{C_1}{\tilde{\epsilon}}\exp{(-C_2\tilde{\epsilon}^2(n-q_n-1))}
\end{eqnarray}
for all $n$ such that $q_n<n-2$, where $C_1$, $C_2>0$ are constants
depending on $v$ in (\ref{add1}), i.e. they depend on $\sigma^2$ and $r$ in
assumptions (B) and (F). 

Next, we compute a bound for $\PR{B^C_{js(i,j)}}$. Note
that 
\[
\begin{split}
\PR{B^C_{js(i,j)}\mid X_{s(i,j)}}
&=
\PR{\frac{(n-|s(i,j)|-1)\hat{\sigma}^2_{j|s(i,j)}}{\sigma^2_{j|s(i,j)}}\leq
\frac{(n-|s(i,j)|-1)}{2} \mid X_{s(i,j)}}\\
&= \PR{\chi^2_{n-|s(i,j)|-1}\leq \frac{(n-|s(i,j)|-1)}{2}}\\
&\leq \PR{\chi^2_{n-q_n-1}\leq \frac{n-1}{2}}.
\end{split}
\]
Now we apply Bernstein's inequality \cite[Lemma 2.2.11]{vdVAW96} by
writing 
\[
\begin{split}
\PR{\chi^2_{n-q_n-1}\leq \frac{n-1}{2}}
&= \PR{\chi^2_{n-q_n-1}-(n-q_n-1)\leq \frac{-(n-1)}{2}+q_n}
\\
&\leq \PR{|\chi^2_{n-q_n-1}-(n-q_n-1)|< \frac{(n-1)}{2}-q_n}
\end{split}
\]
and noting that $\chi^2_{n-q_n-1}-(n-q_n-1)$ can be viewed as the sum
of $n-q_n-1$ independent centered $\chi^2_1$ random variables. Hence, the
last term is bounded above by 
\[
2\exp{\left(-\frac{(\frac{n-1}{2}-q_n)^2}{C^\prime_3+C^\prime_4(\frac{n-1}{2}-q_n)}\right)}
\]
where $C^\prime_3$, $C^\prime_4>0$ are constants arising from moment
conditions. This expression is in addition bounded above by
\begin{eqnarray}\label{add4}
2\exp{(-C_3(\frac{n}{2}-q_n-1))}
\end{eqnarray}
for all $n$ such that
$\frac{n-2}{2}-q_n>C^\prime_3$, and $C_3 > 0 $ is a constant arising from
moment conditions. Because this bound in (\ref{add4}) holds for all $X_{s(i,j)}$
with $|s(i,j)|\leq q_n$, it also holds for the unconditional probability
$\PR{B^C_{js(i,j)}}$.

The upper bound for
$\PR{|\hat{\beta}^{(i)}_j-\beta^{(i)}_j|>\tilde{\epsilon}}$ now follows
by combining (\ref{add3}) and (\ref{add4}): 
\[
\begin{split}
&\PR{\left|\hat{\beta}^{(i)}_j-\beta^{(i)}_j\right|>\tilde{\epsilon}}\\
&\leq \int_{B_{js(i,j)}}
\PR{\left|\hat{\beta}^{(i)}_j-\beta^{(i)}_j\right|>\tilde{\epsilon}\mid
  pa(i)}dF_{X_{j,s(i,j)}}+\PR{B^C_{js(i,j)}}
\\
&\leq \frac{C_1}{\tilde{\epsilon}}\exp{(-C_2\tilde{\epsilon}^2(n-q_n-1))}+2\exp{(-C_3(\frac{n}{2}-q_n-1))}.
\end{split}
\]

Now by using $\tilde{\epsilon}=\frac{\epsilon}{q_n}$ we derive
\begin{equation}
\label{supP}
\begin{split}
&\sup_{i,j}\PR{\left|\hat{\beta}^{(i)}_j-\beta^{(i)}_j\right|>\frac{\epsilon}{q_n}}\\
&\leq
\frac{C_1q_n}{\epsilon}\exp{(-C_2\frac{\epsilon^2}{q^2_n}(n-q_n-1))}+2\exp{(-C_3(\frac{n}{2}-q_n-1))}
\end{split}
\end{equation} 
which holds for all $n > 2 (q_n + C_3^\prime) + 2 = 2(q_n + C_4)$. 
Combining (\ref{supP}) with (\ref{lang}) we complete the proof of Lemma
\ref{llb}.  
 
\end{proof}

%%%%%%%%%%%%%%%%%%%%%%%%%%%%%%%%%%%%%%%%%%%%%%%%%%%%%%%%%%%%%%%%%%%%%%%%%%%%%%%%

\begin{lemma}
\label{llsigma}
Suppose that the Gaussian distribution in assumption (A), assumptions (B)
and (F) hold. Then, for every $\epsilon>0$,
\begin{eqnarray*}
&&\PR{\sup_{1\leq i \leq p_n}\left|
      \frac{1}{\hat{\sigma}^2_{i|pa(i)}}-\frac{1}{\sigma^2_{i|pa(i)}}\right|>\frac{\epsilon}{q_n}}\\
&&\leq  
  p_n 2 \left(\exp\left(-\frac{\epsilon^2(n-q_n)}{6C^2q_n^2\sigma^4+4C \epsilon
      q_n\sigma^2}\right) + \exp \left(-\frac{r^2(n - q_n)}{24 \sigma^4+ 8r
      \sigma^2}\right) \right)     
\end{eqnarray*}
where $C>0$ is an absolute constant and $r>0$ as in assumption (F).

\end{lemma}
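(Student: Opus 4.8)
The plan is to follow the template of the proof of Lemma~\ref{llb}: reduce to a single index by a union bound, use the exact $\chi^2$ law of the rescaled residual variance, isolate a high-probability event on which $\hat\sigma^2_{i|pa(i)}$ stays away from $0$, and conclude with Bernstein's inequality for sums of centered $\chi^2_1$ random variables (Lemma~2.2.11 in \cite{vdVAW96}, as already used above). First I would apply the union bound
\[
\PR{\sup_{1\le i\le p_n}\Bigl|\tfrac{1}{\hat\sigma^2_{i|pa(i)}}-\tfrac{1}{\sigma^2_{i|pa(i)}}\Bigr|>\tfrac{\epsilon}{q_n}}\le p_n\sup_{1\le i\le p_n}\PR{\Bigl|\tfrac{1}{\hat\sigma^2_{i|pa(i)}}-\tfrac{1}{\sigma^2_{i|pa(i)}}\Bigr|>\tfrac{\epsilon}{q_n}},
\]
so that it suffices to bound the probability for a fixed $i$. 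Write $\tau^2=\sigma^2_{i|pa(i)}$ and $m=n-|pa(i)|$; then $m\ge n-q_n$ (since $pa(i)\subseteq adj(i,G_n)$, with $q_n$ as in~(D)) and $r\le\tau^2\le\sigma^2$ by~(B) and~(F). Because the residual sum of squares of the intercept-free Gaussian regression of $X_i$ on $X_{pa(i)}$ satisfies $m\,\hat\sigma^2_{i|pa(i)}/\tau^2\sim\chi^2_m$ conditionally on ${\bf X}_{pa(i)}$, and this conditional law is free of the conditioning values, the same distributional statement holds unconditionally.

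Next I would split on the event $G_i=\{|\hat\sigma^2_{i|pa(i)}-\tau^2|\le r/2\}$, on which $\hat\sigma^2_{i|pa(i)}\ge\tau^2-r/2\ge r/2$ and hence $\hat\sigma^2_{i|pa(i)}\tau^2\ge r^2/2$. From the identity $\tfrac{1}{\hat\sigma^2_{i|pa(i)}}-\tfrac{1}{\tau^2}=(\tau^2-\hat\sigma^2_{i|pa(i)})/(\hat\sigma^2_{i|pa(i)}\tau^2)$ one gets, on $G_i$, the bound $|\tfrac{1}{\hat\sigma^2_{i|pa(i)}}-\tfrac{1}{\tau^2}|\le\tfrac{2}{r^2}|\hat\sigma^2_{i|pa(i)}-\tau^2|$, so that
\[
\PR{\Bigl|\tfrac{1}{\hat\sigma^2_{i|pa(i)}}-\tfrac{1}{\tau^2}\Bigr|>\tfrac{\epsilon}{q_n}}\le\PR{\bigl|\hat\sigma^2_{i|pa(i)}-\tau^2\bigr|>\tfrac{r^2\epsilon}{2q_n}}+\PR{G_i^c}.
\]
Substituting $\hat\sigma^2_{i|pa(i)}=\tau^2\chi^2_m/m$ and using $\tau^2\le\sigma^2$, the first term equals $\PR{|\chi^2_m-m|>mr^2\epsilon/(2\tau^2q_n)}\le\PR{|\chi^2_m-m|>mr^2\epsilon/(2\sigma^2q_n)}$, while $\PR{G_i^c}=\PR{|\chi^2_m-m|>mr/(2\tau^2)}\le\PR{|\chi^2_m-m|>mr/(2\sigma^2)}$. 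Writing $\chi^2_m-m$ as a sum of $m$ i.i.d.\ centered $\chi^2_1$ variables, Bernstein's inequality gives $\PR{|\chi^2_m-m|>t}\le 2\exp(-t^2/(am+bt))$ for absolute constants $a,b>0$; inserting $t=mr^2\epsilon/(2\sigma^2q_n)$ produces the first exponential term in the claim and $t=mr/(2\sigma^2)$ the second, after cancelling the surplus powers of $m$ and bounding $m\ge n-q_n$. Multiplying through by $p_n$ and by the overall factor $2$ finishes the argument.

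The only genuinely delicate point will be the choice of truncation level in $G_i$: it must be small enough that $G_i$ keeps $\hat\sigma^2_{i|pa(i)}$ above a constant multiple of the \emph{known} quantity $r$ — so that the factor $\hat\sigma^2_{i|pa(i)}\tau^2$ in the reciprocal identity is bounded below by $r^2/2$ uniformly in $i$, rather than only by a multiple of the unknown $\tau^2$ — yet large enough that $\PR{G_i^c}$ is a one-sided $\chi^2$ deviation whose Bernstein bound already has the shape required for the second exponential term. The level $r/2$ achieves both, and everything after that is bookkeeping: the numerical coefficients and the constant $C$ in the statement emerge from the Bernstein constants $a,b$ together with the crude substitutions $\tau^2\le\sigma^2$ and $m\ge n-q_n$.
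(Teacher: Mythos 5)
Your proposal is correct and follows essentially the same route as the paper's proof: union bound over $i$, reduction of $|\hat{\sigma}^2_{i|pa(i)}-\sigma^2_{i|pa(i)}|$ to a centered $\chi^2$ deviation handled by Bernstein's inequality, and a truncation event at level $r/2$ to control the reciprocal. The only cosmetic difference is that you bound $\bigl|\tfrac{1}{\hat{\sigma}^2}-\tfrac{1}{\sigma^2}\bigr|$ via the exact identity $(\sigma^2-\hat{\sigma}^2)/(\hat{\sigma}^2\sigma^2)$ with the explicit constant $2/r^2$, whereas the paper uses a mean-value (Taylor) expansion with an unspecified constant $\tilde{C}$ playing the same role.
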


\begin{proof}[Proof]

Using the union bound, for $\tilde\epsilon >0$,
\[
\PR{\sup_{i=1,\ldots ,p_n}\left|
    \hat{\sigma}^2_{i|pa(i)}-\sigma^2_{i|pa(i)}\right|>\tilde\epsilon} \leq
p_n \sup_{i=1,\ldots
  ,p_n}\PR{\left|\hat{\sigma}^2_{i|pa(i)}-\sigma^2_{i|pa(i)}\right|>\tilde\epsilon}.  
\]
For the conditional probability, when conditioning on ${\bf X}_{pa(i)} =
\{X_{r,j};\ r=1,\ldots ,n,\ j \in pa(i)\}$, we have that
$\PR{\left|\hat{\sigma}^2_{i|pa(i)}-\sigma^2_{i|pa(i)}\right|>\tilde{\epsilon} \mid
  {\bf X}_{pa(i)} }$ is equal to
\[
\begin{split}
  &\PR{\left|\frac{\hat{\sigma}^2_{i|pa(i)}}{\sigma^2_{i|pa(i)}}-1\right|>
    \frac{\tilde{\epsilon}}{\sigma^2_{i|pa(i)}}\mid {\bf X}_{pa(i)}}=\\  
 &\PR{\left|\frac{(n-|pa(i)|)\hat{\sigma}^2_{i|pa(i)}}{\sigma^2_{i|pa(i)}} -
     (n-|pa(i)|)\right|>\frac{\tilde{\epsilon}(n- |pa(i)|)}{\sigma^2_{i|pa(i)}}\mid
   {\bf X}_{pa(i)}}. 
\end{split}
\]
Because $\frac{(n-|pa(i)|)\hat{\sigma}^2_{i|pa(i)}}{\sigma^2_{i|pa(i)}}-(n-|pa(i)|)$ is a sum
of $(n-|pa(i)|)$ independent $\chi_1^2$-distributed centered random
variables, we can use Bernstein's inequality \cite[Lemma
2.2.11]{vdVAW96}. Hence, with $\sigma^2_{i|pa(i)} \leq \sigma^2$ we get   
\begin{eqnarray*}
&&\PR{\left|\frac{(n-|pa(i)|)\hat{\sigma}^2_{i|pa(i)}}{\sigma^2_{i|pa(i)}} -
     (n-|pa(i)|)\right|>\frac{\tilde\epsilon(n-|pa(i)|)}{\sigma^2_k} \mid
   {\bf X}_{pa(i)}}\\ 
&&\leq 2\exp\left(-\frac{\tilde\epsilon^2(n-|pa(i)|)}{6\sigma^4+4\tilde\epsilon
     \sigma^2}\right).    
\end{eqnarray*}
Since this bound holds for all ${\bf X}_{pa(i)}$, the bound also applies 
to the unconditional probability:
\begin{eqnarray}\label{boundlemma}
&&\PR{\left|\hat{\sigma}^2_{i|pa(i)}-\sigma^2_{i|pa(i)}\right|>\tilde{\epsilon}}\nonumber\\ 
&&=\PR{\left|\frac{(n-|pa(i)|)\hat{\sigma}^2_{i|pa(i)}}{\sigma^2_{i|pa(i)}} -
     (n-|pa(i)|)\right|>\frac{\tilde\epsilon(n-|pa(i)|)}{\sigma^2_k}
 }\nonumber\\
&&\leq 2\exp\left(-\frac{\tilde\epsilon^2(n-|pa(i)|)}{6\sigma^4+4\tilde\epsilon
     \sigma^2}\right).    
\end{eqnarray}

We use now a Taylor expansion:
\[
\frac{1}{\hat{\sigma}^2_{i|pa(i)}}=\frac{1}{\sigma^2_{i|pa(i)}}-\frac{1}{\tilde{\sigma}^4_{i|pa(i)}}(\hat{\sigma}^2_{i|pa(i)}-\sigma^2_{i|pa(i)}),
\]
where $\left|\tilde{\sigma}^2_{i|pa(i)}-\sigma^2_{i|pa(i)}\right|\leq
\left|\hat{\sigma}^2_{i|pa(i)}-\sigma^2_{i|pa(i)}\right|$.\\ Consider the set $B =
\{\sup_{i=1,\ldots ,p_n} \left|\hat{\sigma}^2_{i|pa(i)}-\sigma^2_{i|pa(i)}\right| \le r/2\}$
with $r>0$ as in assumption (F). Then, on $B$, we have
$\left|\frac{1}{\tilde{\sigma}^4_{i}}\right|\leq \tilde{C} < \infty$ (and
the bound does not depend on the index $i$). Therefore,
\[
\begin{split}
  &\PR{\sup_{i=1,\ldots ,p_n}\left|
    \frac{1}{\hat{\sigma}^2_{i|pa(i)}}-\frac{1}{\sigma^2_{i|pa(i)}}\right|>\frac{\epsilon}{q_n}}\\ 
&\leq
\PR{\left\{\tilde{C}\sup_{i=1,\ldots ,p_n}
  \left|\hat{\sigma}^2_{i|pa(i)}-\sigma^2_{i|pa(i)}\right|>\frac{\epsilon}{q_n}\right\} \cap B} + \PR{B^C}
\end{split}
\]
The first term and second term on the right-hand side can be bounded using
(\ref{boundlemma}), leading to the bound in the statement of the lemma. This
completes the proof of Lemma \ref{llsigma}.

\end{proof}

%%%%%%%%%%%%%%%%%%%%%%%%%%%%%%%%%%%%%%%%%%%%%%%%%%%%%%%%%%%%%%%%%%%%%%%%%%%%%%%%% 

\begin{proof}[Proof of Lemma \ref{lemma:consistenz}]

Let $G$ be a DAG from the true underlying CPDAG, i.e. the true equivalence
class. Using the union bound we have
\begin{eqnarray}\label{add5}
\PR{\sup_{i,j=1,\ldots ,p_n}\left|
    \hat{\Sigma}^{-1}_{G,n;i,j}-\Sigma^{-1}_{n;i,j}\right|>\gamma} \leq
p_n^2\sup_{i,j}\PR{\left|\hat{\Sigma}^{-1}_{G,n;i,j}-\Sigma^{-1}_{n;i,j}\right|
  >\gamma}.   
\end{eqnarray}
Since $\hat{\Sigma}^{-1}=\hat{A}^{T}\COVH{\epsilon}^{-1}\hat{A}$ we have
$\hat{\Sigma}^{-1}_{G,n;i,j}=\sum_{k=1}^{p_n}\hat{\lambda}_k 
\hat{A}_{kj}\hat{A}_{ki}$ with
$\hat{\lambda}_k=\frac{1}{\hat{\sigma}^2_k}$ and $\hat{A}$ as in
(\ref{estcov}). Thus, 
\[
\begin{split}
  \left|\hat{\Sigma}^{-1}_{G,n;i,j}-\Sigma^{-1}_{n;i,j}\right| &=
  \left|\sum_{k=1}^{p_n}\left(\hat{\lambda}_k\hat{A}_{kj}\hat{A}_{ki}-\lambda_k 
      A_{kj}A_{ki}\right)\right|\\
&\leq \sum_{k=1}^{p_n}\left|\hat{\lambda}_k\hat{A}_{kj}\hat{A}_{ki}-\lambda_k
  A_{kj}A_{ki}\right|\\ 
&= \sum_{k=1}^{p_n}\left|\hat{\lambda}_k\hat{A}_{kj}\hat{A}_{ki}-\hat{\lambda}_kA_{kj}A_{ki}+\hat{\lambda}_kA_{kj}A_{ki}-\lambda_k
  A_{kj}A_{ki}\right|\\ 
&= \sum_{k=1}^{p_n}\left|\hat{\lambda}_k\left(\hat{A}_{kj}\hat{A}_{ki}-
    A_{kj}A_{ki}\right)+A_{kj}A_{ki}\left(\hat{\lambda}_k-\lambda_k\right)\right|\\
&\leq \sum_{k=1}^{p_n}\left(\left|\hat{\lambda}_k\right|\left|\hat{A}_{kj}\hat{A}_{ki}- A_{kj}A_{ki}\right|+\left|A_{kj}A_{ki}\right|\left|\hat{\lambda}_k-\lambda_k\right|\right)
\end{split}
\]
Consider the terms $\left|\hat{A}_{kj}\hat{A}_{ki}-
  A_{kj}A_{ki}\right|$ and $\left|\hat{\lambda}_k\right|$:
\[
\begin{split}
  \left|\hat{A}_{kj}\hat{A}_{ki}- A_{kj}A_{ki}\right|
&= \left|\hat{A}_{kj}\hat{A}_{ki}-\hat{A}_{kj}A_{ki}+\hat{A}_{kj}A_{ki}-
  A_{kj}A_{ki}\right|\\ 
&= \left|\hat{A}_{kj}\left(\hat{A}_{ki}-A_{ki}\right)+A_{ki}\left(\hat{A}_{kj}-
  A_{kj}\right)\right|\\ 
&\leq \left|\hat{A}_{kj}\right|\left|\hat{A}_{ki}-A_{ki}\right|+\left|A_{ki}\right|\left|\hat{A}_{kj}-A_{kj}\right|\\ 
\end{split}
\]
\[
\begin{split}
  \left|\hat{\lambda}_k\right|
= \left|\hat{\lambda}_k-\lambda_k+\lambda_k\right| \leq \left|\hat{\lambda}_k-\lambda_k\right|+\left|\lambda_k\right| 
\end{split}
\]
By plugging these bounds into the formula above and using that the
summations are over at most $q_n$ terms only (due to sparsity of
$\hat{A}_{ki}$ and $A_{ki}$), we obtain 
\[
\left|\hat{\Sigma}^{-1}_{G,n;i,j}-\Sigma^{-1}_{n;i,j}\right| \leq C q_n \delta
\] 
where $C >0$ is an absolute constant and $\delta$ the maximal absolute
difference of $\hat{A}$'s and $\hat{\lambda}$'s:
\begin{eqnarray*}
\delta = \max\{\max_{i,k}|\hat{A}_{ki} - A_{ki}|,\max_k |\hat{\lambda}_k -
\lambda_k|\}. 
\end{eqnarray*}
Hence
\[
\PR{\left|\hat{\Sigma}^{-1}_{G,n;i,j}-\Sigma^{-1}_{n;i,j}\right|>\gamma} \leq
\PR{C q_n\left|\delta\right|>\gamma}
=\PR{\left|\delta\right|>\frac{\gamma}{Cq_n}}
=\PR{\left|\delta\right|>\frac{\epsilon}{q_n}} 
\]
with $\frac{\gamma}{C}=\epsilon$. Because the convergence of the term
$\PR{\left|\delta\right|>\frac{\epsilon}{q_n}}$ is covered either by Lemma
\ref{llb} or Lemma \ref{llsigma}, since $q_n^2 = O(n^{1 - 2b})\ (0<b\le
1/2)$ from assumption (D), and using (\ref{add5}), we complete the proof of
Lemma \ref{lemma:consistenz}. 
\end{proof}

\section{Modifications of the PC-DAG covariance estimator}\label{sec.modif}  

With finite sample size, the PC-algorithm may make some errors. One of them
can produce conflicting v-structures when orienting the graph: if so, we
deal with it by keeping one and discarding other v-structures. In our
implementation, the result then depends on  
the order of the performed independence tests. Furthermore, it may happen
that the output of the PC-algorithm is an invalid CPDAG which does not
describe an equivalence class of DAGs. In such a case we use the
\emph{retry} type orientation procedure implemented in the 
\texttt{pcAlgo}-function of the \texttt{pcalg}-package, see the reference
manual of the \texttt{pcalg}-package \cite{pcalgMAN} for more information.  
\end{appendix}

\bibliography{covDAG}

\end{document}